\newcommand{\be}{\begin{equation}}
\newcommand{\ee}{\end{equation}}
\newcommand{\bea}{\begin{eqnarray}}
\newcommand{\eea}{\end{eqnarray}}
\newcommand{\ba}{\begin{array}}
\newcommand{\ea}{\end{array}}
\newcommand{\rf}[1] {(\ref{#1})}
\newcommand{\half}{\mbox{$\frac{1}{2}$}}
\newcommand{\eps}{\epsilon}
\title{The Effects of Viscosity on the Linear Stability of  Damped Stokes Waves, Downshifting,  and Rogue Wave Generation}
\author[1]{A. Calini}
\author[2]{C.L. Ellisor}
\author[3]{C.M. Schober \thanks{Corresponding author: cschober@ucf.edu}}
\author[3]{E. Smith}
\affil[1]{Department of Mathematics, College of Charleston}
\affil[2]{Department of Mathematics, Wake Forest University}
\affil[3]{Department of Mathematics, University of Central Florida}
\date{\today}
\begin{document}

\def\ri{\mathrm{i}}
\def\rd{\mathrm{d}}
\def\e{\mathrm{e}}

\maketitle

\begin{abstract}
  We investigate a higher order nonlinear Schr\"odinger equation with linear damping and weak viscosity, recently proposed as a model for deep water waves exhibiting frequency downshifting. Through analysis and numerical simulations, we discuss how the viscosity affects the linear stability of the Stokes wave solution, enhances rogue wave formation, and leads to permanent downshift in the spectral peak. The novel results in this work include the analysis of the transition  from the initial Benjamin-Feir instability to a predominantly oscillatory behavior, which takes place in a time interval when most rogue wave activity occurs. In addition, we propose new criteria for downshifting in the spectral peak and determine the relation between the time of permanent downshift and the location of the global minimum of the momentum and the magnitude of its second derivative.
\end{abstract}

\section{Introduction}
The modeling of water waves has provided a rich array of interesting nonlinear phenomena. A notable example is the modulational instability of the Stokes wave in deep water, along with its attendant consequences. 
For instance,  Lake et.~al.~\cite{lake_yuen_rungaldier_ferguson_1977} conducted laboratory experiments to examine the long-time evolution of modulated Stokes waves. They observed a permanent downward shift in the dominant frequency of the wave train for very steep waves, following the growth of the modulational instability. This result was later extended to wave trains of moderate steepness in the separate studies by \cite{melville_1982} and \cite{su82}.
These experiments did not align with the theoretical predictions of recurrence given by the nonlinear Schr\"odinger (NLS) equation (see equation~\eqref{vHONLS} below with $\epsilon = \Gamma =0$); in fact, Lake noted that downshifting was inconsistent with the
assumption of a constant carrier frequency in the derivation of the NLS equation.

Following the original experiment by Lake et.~al.~\cite{lake_yuen_rungaldier_ferguson_1977}, further studies focused on
developing models which provide theoretical explanations for
frequency downshifting.
Lo and Mei~\cite{lo_mei_1985} found that only temporary
downshift was obtained  using the Dysthe equation,
derived by retaining higher order terms in the asymptotic expansion for the surface wave displacement \cite{dysthe},
primarily due to an insufficient change in the flux.  
The prevailing viewpoint is that dissipation and wind play crucial roles in facilitating downshifting.
Models which included  wind and wave breaking terms, to account for the experimentally observed asymmetric spectral evolutions, were examined in the works of \cite{eeltink2017, hara_mei_1991,TrulsenDysthe1990,SS}
and \cite{tulin99,huang96,okamura96}, respectively. 

With the inclusion of  various types of nonlinear dissipation in the Dysthe model, permanent downshift was obtained for both breaking waves \cite{TrulsenDysthe1990} and non-breaking waves \cite{KatoOikawa95}. Moreover, in a nonlinear damped  higher order NLS (NLD-HONLS) model, where  the damping is
 effective only near the crest of the envelope, permanent downshifting was shown  
to affect  rogue wave activity \cite{islas,SS}. 
In a study of how well widely used deep-water  models capture the frequency downshifting exhibited in a series  of laboratory experiments, the NLD-HONLS was found to  be the most accurate  model in typical situations~\cite{chb2019}.
Even so,  the NLD-HONLS model does not capture all downshifting phenomena, as
it was found to be ineffective in describing the frequency downshift observed in ocean swells.

In \cite{didyza2008,cartergovan}, a viscous  version  of the Dysthe system (vDysthe) is
derived from first-principles by incorporating viscosity into Euler's equations. The introduction of weak viscosity provides a mechanism for  frequency downshifting that is independent of wind or wave breaking. This
is an important feature when analyzing ocean swells, which are deep water waves no longer sustained  by wind. By comparing simulations of the viscous Dysthe system for the surface displacement with  ocean swell field data,
it was determined that the vDysthe model serves as a
reasonable model  for swells that display frequency downshift \cite{zaugcarter21}. 

In this work we study the linear stability of  damped Stokes waves, the mechanism for frequency downshifting, and rogue wave activity in the framework of
a viscous higher-order nonlinear Schr\"odinger equation: 
\begin{equation}
\label{vHONLS}
    \ri u_t + u_{xx} + 2u|u|^2 + \ri \Gamma u + \epsilon\left[ 2u \mathscr{H} (|u|^2)_x - 8\ri|u|^2 u_x + \frac{\ri}{2} u_{xxx} + 2 \nu \Gamma u_x\right] - \ri \epsilon^2 \delta u_{xx}= 0,
\end{equation}
with $\epsilon>0$ and  $0 < \delta \ll 1$. The dependent variable $u$ represents  the complex envelope of the wave train and is spatially periodic of period $L$. The expression $\displaystyle \mathscr{H}(f)(x) := \frac{1}{\pi} \int_{-\infty}^{\infty} \frac{f(\xi)}{x - \xi} d\xi$ denotes the Hilbert transform of $f$. 
The  derivation of \eqref{vHONLS} from the equations for weakly damped free-surface flows is discussed in Appendix~\ref{derive_vHONLS}.

For $\epsilon=0$, equation~\eqref{vHONLS} reduces to the linearly damped NLS, with $\ri \Gamma u$ the linear damping term.
When  $\Gamma=\delta=0$ (HONLS), the first three terms within the brackets
give a Hamiltonian version of the Dysthe equation ~\cite{fd11,gramstad_trulsen_2011}.
The effects of viscosity are modeled by the terms $\ri \Gamma u$ and $2\epsilon \Gamma u_x$.
The parameter $\nu$ (0 or 1) is introduced to allow a comparison  of the viscous HONLS equation
(vHONLS: $\nu = 1, \delta = 0$)
 with the linear damped HONLS equation (LDHONLS:  $\nu = \delta = 0$).
The vHONLS is a version of the dissipative Gramstad-Trulsen equation introduced in \cite{chb2019}. We also included a diffusive term $-\ri\epsilon^2  \delta u_{xx}$
(dvHONLS) in order to investigate how a small amount of diffusion can offset instabilities caused by the viscosity. 
The complete set of  parameter values for the models
is provided in (\ref{block}). 

 \medskip

In Section 2, we investigate the linear stability of the damped Stokes wave solution of equation~\eqref{vHONLS}. In particular, we compare the linear dynamics of a perturbation  of the damped Stokes wave in the non-viscous regime to that for the case of added viscous damping. We also discuss the stabilizing effects of diffusion through the addition a higher-order term modeling diffusion effects.  The resulting linearization leads to a linear system with time-dependent coefficients, which adds some challenges to the linear stability analysis. In the infinite-time limit, we can rigorously describe the asymptotic behavior of the perturbation. For finite time, we describe the development of an initial Benjamin-Feir instability and the transition to predominantly oscillatory dynamics; both phenomena are related to the formation and cessation of rogue waves.
\smallskip

The results for the infinite-time limit reveal, as observed by other authors \cite{cartergovan}, the destabilizing effects of the viscous terms, which cannot be entirely offset by the exponential decay due to the linear damping (see Proposition~\ref{largetime} and Corollary~\ref{ltinstabu}). In fact, the Fourier modes with high wave numbers will grow exponentially (over a long time). The linear analysis also partially explains the downshifting phenomenon, since the instability due to viscosity only appears in the modes associated with negative values of the wave number (see Corollary~\ref{negk}).  The addition of diffusion has a stabilizing effect and, if large enough (but still small), every mode of the perturbative term will decay exponentially (see Proposition~\ref{prop2p2}.)

While the infinite-time limit of the linearization provides the description of the combined effect of viscosity, linear damping and, when present, diffusion, it is  important to understand the initial behavior a solution of~\eqref{vHONLS} close to a damped Stokes wave that is unstable at $t=0$. When one or more modes of the initial perturbation satisfy a linear instability condition (the analogue of the classical Benjamin-Feir inequality), these modes exhibit an interesting dynamics: their amplitudes grow exponentially up to a critical time, at which the exponential growth begins to saturate. After that, the behavior becomes predominantly oscillatory, superimposed upon the much slower exponential growth introduced by the viscosity. 
The initial growth can be substantial, with the solution reaching its absolute maximum over a very large time interval, and it appears to be responsible for the rogue wave formation.
In Section~\ref{shortmoder}, we analyse this phenomenon with and without viscosity effects. Interestingly, the linearization~\eqref{linNLS} can be reduced to (much simpler looking) second order ordinary differential equations with time-dependent coefficients,~\eqref{2de0} and~\eqref{2deg}. For the non-viscous case, we show that the higher order terms lead to an increased ``effective amplitude" of the damped Stokes wave, thus making it more unstable and delaying the onset of the damped oscillatory behavior (see Proposition~\ref{0transition}). When viscosity is present, the analysis is more challenging since the resulting ODE has now {\em  time-dependent complex} coefficients. Nevertheless, we are able to provide a qualitative analysis of the transition between the initial exponential growth and the predominantly oscillatory behavior. We show that the viscosity has a destabilizing effect, not only for $t\rightarrow +\infty$, but also on the initial evolution, by 
increasing the initial instability and delaying the onset of the oscillatory behavior (see Theorem~\ref{gtransition}). 
This is in agreement with the numerical observation
of increased rogue wave activity  in the viscous HONLS equation as compared with the linear damped HONLS
equation on the same time scales (see Figures \ref{NRWs} (a) -- (b)).

In Section~\ref{FDS_RWA},  we examine the impact of 
viscous damping on 
frequency downshifting and rogue wave formation.
Frequency downshifting is commonly  evaluated using two distinct spectral measures: the spectral mean $k_m$, a weighted average of the wave's spectral content, and the spectral peak $k_{peak}$, labeling the Fourier mode with the highest energy. 
The two measures capture different  characteristics of the spectral  distribution and can sometimes lead to differing conclusions on downshifting; in particular,
$k_m$  does not provide insight into the occurrence or the time of a permanent downshift in $k_{peak}$. 

Since $k_m$ can be expressed as the ratio of the 
momentum $P$ to the energy $E$ of the system,  explicit conditions on $P$
have been given for the occurrence of downshifting
in the spectral mean sense, as outlined in \eqref{key}.
On the other hand, while it is well-known that $P$ measures the asymmetry of the Fourier modes, no criteria have been proposed for predicting the occurrence and  timing of
permanent downshift in $k_{peak}$ for viscous HONLS models in terms of the behavior of the momentum $P$.

 In Section~\ref{visc_results}  we 
 determine  the structure of $P$ for solutions of the vHONLS and dvHONLS equations with perturbed Stokes wave initial data.  
We derive  an analytical estimate for
the time $t_{P_{min}}$ of the global minimum of $P$, which
is in very good agreement with the numerical data.
We
find that  both $t_{Pmin}$  and  the magnitudes of the first and second derivatives of $P$
play a crucial role in the underlying mechanism that leads to permanent
downshifting. The  shifts  in $k_{peak}$ result from significant
fluctuations in the derivatives of the momentum, which
experience substantial decay  for $ 0 < t <  t_{P_{min}}$.
Permanent downshift  in $k_{peak}$ occurs once the second derivative
$P''$  has become small enough
for changes in the first derivative $P'$  to no longer be sufficient for $k_{peak}$ to upshift back.

In Section \ref{ensemble}, we discuss the results of an  ensemble of
 experiments,  where  the damping parameter $\Gamma$ is varied.
We find that permanent downshift occurs in the solutions  of
both the vHONLS and dvHONLS equations across all initial conditions in the two-unstable modes regime  and for all given values of $\Gamma$.   

Last, we discuss the effects of viscous damping and downshifting  on rogue wave formation.
Comparing rogue wave activity in two sets of experiments: a) LDHONLS versus vHONLS evolution, and b) LDHONLS versus dvHONLS evolution,
we find that, for all values of $\Gamma$, as many or more rogue waves develop in the vHONLS and dvHONLS models as compared with the LDHONLS model.
The experiments also indicate that rogue waves do not occur after permanent downshifting. 

\section{Linear Stability Analysis of the viscous damped Stokes wave}
In this section we study the linear stability of the {\em damped Stokes wave} solution
\begin{equation}
\label{dStokes}
u_a^\Gamma(t)=a \e^{-\Gamma t} \text{exp} \left( \ri a^2 \frac{1 - \e^{-2\Gamma t}}{\Gamma}\right)
\end{equation}
of equation~\eqref{vHONLS}, where $a$ is a real constant representing the amplitude at $t=0$ and $\Gamma>0$ the linear damping parameter. Since $u_a^\Gamma$ does not depend on $x$, it is  a solution of both the linearly damped NLS, obtained by setting $\epsilon=0$, and the full viscous Dysthe model~\eqref{vHONLS} with and without the diffusion term. 

Substituting the ansatz $u(x,t)=u_a^\Gamma(t)(1 +\alpha v(x,t))$, with $0<\alpha \ll 1$ into~\eqref{vHONLS}, we derive the linearization
\begin{equation}
    \label{linNLS}
\ri v_t+v_{xx}+2 |u_a^\Gamma|^2 (v+v^*)+\epsilon \left[2 |u_a^\Gamma|^2 \mathscr{H} (v_x+v^*_x)
-8\ri |u_a^\Gamma|^2v_x +\frac{\ri}{2}v_{xxx}+2\nu \Gamma v_x\right]=0.
\end{equation}

\noindent
{\bf Important Note:} 
Throughout this section we will focus on the solutions of equation~\eqref{linNLS}, where we have factored out the overall exponential decay coming from the linear damping. The quantity $v(x,t)$ describes the {\em relative deviation} of the perturbed solution from the damped Stokes wave and the {\em spatial features} of the perturbed solution. Both are important to understand in relation to rogue wave formation and frequency downshifting. We will study the long-time asymptotic behavior of $v$ as well as any transient dynamics due to an initial Benjamin-Feir instability, and discuss  the implications for the linear stability of the original damped Stokes wave.
\smallskip

The main analysis is carried out in absence of diffusion ($\delta=0$), but it is easily modify when $\delta\not=0$. In particular, we will examine the effects of diffusion on the long-term dynamics at the end of section~\ref{largetime}.

\smallskip

Since we are interested in the time behavior of the Fourier coefficients of the complex series for $v$, for $k\geq 0$, we let 
\[
v_k(x,t)=A_k(t)\e^{\ri k x}+B_k(t)\e^{-\ri k x}.
\]
Using the identity $\mathscr{H} ((v_k)_x+(v_k^*)_x)= k (v_k+v_k^*)$ and writing 
 $A_k=x_k+\ri y_k$ and $B_k=w_k + \ri z_k$, we arrive at the following linear system for $\vec{u}_k=(x_k, y_k,w_k, z_k)^T$:
\begin{equation}
\label{lsys1}
\frac{\rd \vec{u}_k(t)}{\rd t}=[M + E(t) N]\vec{u}_k,
\end{equation}
with 

\[
M=\begin{pmatrix} -\gamma & -\frac{1}{2}k^2(-2+\epsilon k)  & 0 & 0 \\
\frac{1}{2}k^2(-2+\epsilon k)   & -\gamma  & 0 & 0 \\
0 & 0  & \gamma & \frac{1}{2}k^2(2+\epsilon k)   \\
0  & 0  & -\frac{1}{2}k^2(2+\epsilon k)   & \gamma
\end{pmatrix} \quad \text{and} 
\]
\[
N=\begin{pmatrix} 0 & -\frac{1}{2}(1+5\epsilon k)  & 0 & \frac{1}{2}(1+\epsilon k) \\
\frac{1}{2}(1+5\epsilon k)   & 0  & \frac{1}{2}(1+\epsilon k) & 0 \\
0 & \frac{1}{2}(1+\epsilon k)  & 0 & -\frac{1}{2}(1-3\epsilon k) \\
\frac{1}{2}(1+\epsilon k)  & 0  & \frac{1}{2}(1-3\epsilon k)  & 0
\end{pmatrix}.
\]
where $E(t)=4a^2\e^{-2\Gamma t}$ encodes the effects of linear damping and $\gamma=2\epsilon k \Gamma \nu\geq 0$ encodes the viscosity effects. 

\smallskip

The eigenvalues of $M$ are readily computed to be the quadruplet
\[
\gamma \pm \frac{ik^2}{2} (2+\epsilon k), \quad -\gamma  \pm \frac{ik^2}{2} (2-\epsilon k).
\]

\subsection{Long-time asymptotics}
\label{largetime}

Having recast equation~\eqref{linNLS} in the form~\eqref{lsys1}, we can use some classical results to determine the long-term behavior of its solutions (i.e. the asymptotic dynamics as $t\rightarrow \infty$.) Similar conclusions were  obtained in~\cite{cartergovan} for a related, but different system.

\begin{prop}
\label{longt}
If $\gamma=0$ (non-viscous case) , then all solutions of system~\eqref{lsys1} are bounded.\\
When $\gamma\not=0$ (viscous case), there exists a  solution of linear system~\eqref{lsys1} which grows exponentially in time at rate $\gamma$.
\end{prop}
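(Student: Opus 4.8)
The plan is to exploit the fact that the time-dependent part of the coefficient matrix in~\eqref{lsys1} decays exponentially, so that~\eqref{lsys1} is an $L^1$ (indeed exponentially small) perturbation of the constant-coefficient system $\dot{\vec u}_k = M\vec u_k$. Since $E(t)=4a^2\e^{-2\Gamma t}$ satisfies $\int_0^\infty E(t)\,\rd t = 2a^2/\Gamma <\infty$, the perturbation $R(t):=E(t)N$ has $\int_0^\infty \|R(t)\|\,\rd t<\infty$, and the long-time behaviour is governed by the spectrum of $M$. Its eigenvalues were computed above to be $\gamma\pm\tfrac{\ri k^2}{2}(2+\epsilon k)$ and $-\gamma\pm\tfrac{\ri k^2}{2}(2-\epsilon k)$; for $k>0$, $\epsilon>0$ these four numbers are distinct, so $M$ is diagonalizable.

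For the non-viscous case $\gamma=0$, I would first note that all four eigenvalues are purely imaginary and $M$ splits into two $2\times2$ rotation blocks, so $\|\e^{Mt}\|\le C$ for all $t\ge0$. Writing the solution by variation of parameters as $\vec u_k(t)=\e^{Mt}\vec u_k(0)+\int_0^t \e^{M(t-s)}E(s)N\vec u_k(s)\,\rd s$ and taking norms gives $\|\vec u_k(t)\|\le C\|\vec u_k(0)\|+C\|N\|\int_0^t E(s)\|\vec u_k(s)\|\,\rd s$. Gronwall's inequality then yields the uniform bound $\|\vec u_k(t)\|\le C\|\vec u_k(0)\|\exp\!\big(2C\|N\|a^2/\Gamma\big)$, proving boundedness of every solution.

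For the viscous case $\gamma\neq0$, I would diagonalize $M=P\Lambda P^{-1}$ and set $\vec u_k=P\vec w$, transforming~\eqref{lsys1} into $\dot{\vec w}=[\Lambda+E(t)P^{-1}NP]\vec w$, again an $L^1$ perturbation of a constant diagonal system. Because $\Lambda$ is constant, the Levinson dichotomy condition holds trivially (for constant eigenvalues $\mathrm{Re}\int_s^t(\lambda_i-\lambda_j)\,\rd\tau=(t-s)\,\mathrm{Re}(\lambda_i-\lambda_j)$ is monotone, hence either bounded below or bounded above), so Levinson's asymptotic integration theorem supplies a fundamental system of solutions $\vec u_k^{(i)}(t)=(\vec p_i+o(1))\e^{\lambda_i t}$, where $\vec p_i$ is the $M$-eigenvector for $\lambda_i$. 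Choosing $\lambda_1=\gamma+\tfrac{\ri k^2}{2}(2+\epsilon k)$, whose real part $\gamma$ is maximal, gives a solution with $\|\vec u_k^{(1)}(t)\|\sim\|\vec p_1\|\,\e^{\gamma t}$.

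The one genuine subtlety, which I would treat carefully, is that $\lambda_1$ is complex while~\eqref{lsys1} has real coefficients, so $\gamma$ is merely the real part of a complex eigenvalue and the solution oscillates; a priori the oscillation could conspire to keep the real solution bounded. To extract a bona fide real growing solution I would pass to real and imaginary parts: writing $\vec p_1=\vec r+\ri\vec s$ with $\vec r,\vec s$ real, the relation $M\vec p_1=\lambda_1\vec p_1$ with $\omega_1:=\tfrac{k^2}{2}(2+\epsilon k)\neq0$ forces $\vec r,\vec s$ to be linearly independent, and $\mathrm{Re}\,\vec u_k^{(1)}(t)=\e^{\gamma t}\big[\cos(\omega_1 t)\vec r-\sin(\omega_1 t)\vec s+o(1)\big]$. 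The bracketed vector traces a nondegenerate ellipse, hence is bounded away from $0$, giving $\limsup_{t\to\infty}\e^{-\gamma t}\|\mathrm{Re}\,\vec u_k^{(1)}(t)\|>0$ — a real solution growing exponentially at rate $\gamma$. The main obstacle is therefore not the Gronwall estimate but verifying the hypotheses of the asymptotic-integration theorem and then ruling out cancellation of the growing mode by oscillation, the latter settled by the linear independence of $\vec r$ and $\vec s$.
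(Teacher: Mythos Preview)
Your proposal is correct and follows the same underlying strategy as the paper: both recognize that~\eqref{lsys1} is an $L^1$ perturbation of the constant system $\dot{\vec u}_k=M\vec u_k$ and invoke classical asymptotic results. The paper simply cites two theorems from Bellman (Ch.~2, Theorems~6 and~7): the first uses $\mathrm{tr}\,M=0$ together with $\int_0^\infty\|E(t)N\|\,\rd t<\infty$ to conclude boundedness when $\gamma=0$; the second gives, for each eigenvalue $\lambda_i$ of $M$, a solution with $\lim_{t\to\infty}t^{-1}\ln\|\vec x_i\|=\Re(\lambda_i)$.

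Your version differs only in the tools you reach for. For $\gamma=0$ you reprove Bellman's Theorem~6 by hand via variation of parameters and Gronwall, which is more self-contained; for $\gamma\neq0$ you invoke Levinson's asymptotic integration theorem, which is sharper than Bellman's Theorem~7 (it yields $\vec u_k^{(i)}(t)=(\vec p_i+o(1))\e^{\lambda_i t}$ rather than just the Lyapunov exponent). Your additional care in extracting a \emph{real} growing solution from the complex eigenvector---by showing the real and imaginary parts trace a nondegenerate ellipse---is a genuine refinement that the paper does not address; the paper's conclusion $\lim t^{-1}\ln\|\vec x_i\|=\gamma$ already holds for the real part of the complex solution, but your argument makes the exponential growth of the real solution more explicit.
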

\begin{proof}
If $\gamma=0$ ($\nu=0)$ then all solutions of $\dot{\vec{x}}= M\vec{x}$ are bounded. Since $\text{tr} (M)=0$ and $\int_0^\infty \|E(t)N\| \rd t<\infty$, then all solutions of system~\eqref{lsys1} are bounded. (See R.~Bellman, Ch. 2, Theorem 6 \cite{bellman}.) \smallskip

 When $\gamma\not=0$ ($\nu=1$), then the large time behavior of the solution is governed by the eigenvalues of $M$. In particular, for every eigenvalue $\lambda_i$ of $M$, there is a solution $\vec{x}_i$ of linear system~\eqref{lsys1} satisfying 
\[
\lim_{t\rightarrow +\infty} \frac{\ln \| \vec{x}_i \|}{t}=\Re(\lambda_i)=\pm \gamma.
\]
(See R.~Bellman, Ch. 2, Theorem 7 \cite{bellman}.)
\end{proof}
\begin{rem}
It follows that the viscous term introduces long-time instabilities in the dynamics. Since $\gamma=2\epsilon k \Gamma$ (with $k>0$), the larger the wave number, the higher the rate of exponential growth. 
\end{rem}

Recalling that $u(x,t)=u_a^\Gamma(t)(1 +\alpha v(x,t))$, in order to obtain the growth rate of the perturbation of the damped Stokes wave, we multiply  $v$ by the factor $\e^{-\Gamma t}$. Then, the large time behavior of its $k$-th mode $A_k(t)\e^{-\Gamma t}\e^{\ri kx} +B_k(t)\e^{-\Gamma t} \e^{-\ri kx}$  is determined  by the sign of $\pm \gamma -\Gamma=(\pm 2\epsilon k -1)\Gamma$, as described by the following results:

\begin{cor}
\label{ltinstabu}
If $\gamma=0$, then the damped Stokes wave $u^\Gamma_a$ is linearly stable, as all solutions of the linearization of the vHONLS equation~\eqref{vHONLS} remain bounded for all times and decay exponentially as $\e^{-\Gamma t}$ as $t\rightarrow +\infty$.

For $\gamma\not=0$, a generic perturbation of the damped Stokes wave will grow exponentially in time. While the Fourier modes with low wave numbers satisfying $k<1/(2\epsilon)$ decay exponentially, for $k>1/(2\epsilon)$ the corresponding modes will grow exponentially at rate $\Gamma(2\epsilon k -1)$. Thus $u_a^\Gamma$ is linearly unstable for any non-zero value of the viscosity.
\end{cor}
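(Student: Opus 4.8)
The plan is to derive the corollary directly from the Proposition by reinstating the exponential prefactor that was divided out in passing from $u$ to $v$. Recall that the ansatz $u(x,t)=u_a^\Gamma(t)(1+\alpha v(x,t))$ gives the perturbation $u(x,t)-u_a^\Gamma(t)=\alpha u_a^\Gamma(t) v(x,t)$, and since $|u_a^\Gamma(t)|=a\e^{-\Gamma t}$ (the phase factor being irrelevant to the magnitude), the size of the physical perturbation is $\alpha a \e^{-\Gamma t}|v(x,t)|$. Thus every exponential rate governing $v$ is shifted by $-\Gamma$ when passing back to the perturbation of the damped Stokes wave. This single observation is what turns the Proposition into the corollary.

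For the non-viscous case $\gamma=0$, the Proposition guarantees that all solutions $\vec{u}_k$ of \eqref{lsys1}, hence all solutions $v$ of the linearization, remain bounded. Multiplying by $\e^{-\Gamma t}$ then forces the perturbation to decay at least as fast as $\e^{-\Gamma t}$ uniformly in $x$, which is precisely linear stability with exponential decay at rate $\Gamma$; this half requires no further work.

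For $\gamma\neq0$ I would argue mode by mode. Fixing a wave number $k$ (with $\nu=1$, so $\gamma=2\epsilon k\Gamma$), the Proposition, via Bellman's Theorem 7, yields a solution of \eqref{lsys1} whose norm grows at the exact rate $\Re(\lambda)=+\gamma$ coming from the eigenvalue $\gamma\pm\frac{ik^2}{2}(2+\epsilon k)$ of $M$; since the integrable perturbation $E(t)N$ does not alter the asymptotic exponential rates, a generic initial condition---one with nonzero projection onto this most unstable eigendirection---produces $|v|$ growing like $\e^{\gamma t}$. Reinstating the prefactor, the corresponding perturbation of $u_a^\Gamma$ grows at net rate $\gamma-\Gamma=\Gamma(2\epsilon k-1)$. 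The sign of this rate flips exactly at $2\epsilon k=1$: for $|k|<1/(2\epsilon)$ the decay from damping dominates and the mode decays, while for $|k|>1/(2\epsilon)$ the mode grows at rate $\Gamma(2\epsilon k-1)>0$. Because arbitrarily large wave numbers lie above the threshold, the damped Stokes wave is linearly unstable for any $\nu\neq0$.

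I expect the only delicate points to be bookkeeping rather than analytic. First, one must be careful to state the threshold in terms of $|k|$: the real Fourier expansion uses $k>0$, but the complex modes $\e^{\pm ikx}$ must both be accounted for, so the conclusion should be phrased for the wave-number magnitude. Second, the phrase ``generic solution'' needs the mild justification that the projection onto the $+\gamma$ eigendirection is nonzero for generic data, so that the maximal rate $\gamma$ is actually realized rather than merely available; this is immediate from Bellman's Theorem 7 but worth stating. No genuinely hard estimate is required, since all the analytic content---boundedness for $\gamma=0$ and the identification of the asymptotic rates $\pm\gamma$ for $\gamma\neq0$---has already been established in the Proposition.
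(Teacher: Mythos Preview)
Your proposal is correct and follows exactly the paper's approach: the paper's entire argument for this corollary is the single sentence ``Multiplying $v$ by the factor $\e^{-\Gamma t}$ leads to the following results,'' and you have simply unpacked that observation mode by mode, computing the net rate $\gamma-\Gamma=\Gamma(2\epsilon k-1)$ just as intended. Your bookkeeping remarks about $|k|$ and genericity are reasonable caveats but add no new ingredient beyond what the paper implicitly uses.
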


Furthermore, if we make the additional change of variables $\vec{\tilde u}_k=(x_k\e^{\gamma t}, y_k\e^{\gamma t},w_k\e^{-\gamma t}, z_k\e^{-\gamma t})^T$, we obtain a system of the same form as~\eqref{lsys1} for $\vec{\tilde u}_k$ with matrices
\[
\tilde{M}=\begin{pmatrix} 0 & -\frac{1}{2}k^2(-2+\epsilon k)  & 0 & 0 \\
\frac{1}{2}k^2(-2+\epsilon k)   & 0  & 0 & 0 \\
0 & 0  & 0& \frac{1}{2}k^2(2+\epsilon k)   \\
0  & 0  & -\frac{1}{2}k^2(2+\epsilon k)   & 0
\end{pmatrix} \quad \text{and} 
\]
\[
\tilde{N}=\begin{pmatrix} 0 & -\frac{1}{2}(1+5\epsilon k)  & 0 & \frac{1}{2}(1+\epsilon k) \e^{2\gamma t} \\
\frac{1}{2}(1+5\epsilon k)   & 0  & \frac{1}{2}(1+\epsilon k)\e^{2\gamma t} & 0 \\
0 & \frac{1}{2}(1+\epsilon k)\e^{-2\gamma t}  & 0 & -\frac{1}{2}(1-3\epsilon k) \\
\frac{1}{2}(1+\epsilon k)\e^{-2\gamma t}  & 0  & \frac{1}{2}(1-3\epsilon k)  & 0
\end{pmatrix}.
\]
A similar argument as for Proposition~\ref{longt} shows that the entries of $\vec{\tilde u}_k$ are bounded as $t\rightarrow +\infty$, thus $A_k=x_k+\ri y_k$ decays exponentially at rate $-\gamma$ and $B_k=w_k+\ri z_k$ grows exponentially at rate $\gamma$. Since the latter is the coefficient of $\e^{-\ri k x}$ in the Fourier expansion of $v$, we arrive at the following
\begin{cor}
\label{negk}
The long-time instabilities introduced by the viscous term in the linearization of the vHONLS equation~\eqref{vHONLS} occur in  the negative Fourier modes associated with  sufficiently high values of the wave number $k$ as described in Corollary~\ref{ltinstabu}.
\end{cor}
\begin{rem}
This result relates to the occurrence of frequency downshifting when viscosity is present, as the amplitudes of the negative modes become dominant for large times, as observed in~\cite{cartergovan}.
\end{rem}

For fixed  $k \geq 0$, we write  $v=A(t) \e^{\ri k x}+B(t)\e^{-\ri k x}.$  Figures~\ref{ABmod551} through~\ref{ABmod552} show the time evolution of the complex moduli of $A$ and $B$ obtained from numerical integration of system~\eqref{lsys1} with initial conditions $a_k=0.2$, $b_k=\sqrt{2}/2$ $c_k=-0.1$, $d_k=(1+\sqrt{3})/2$. While the scale varies, the main qualitative features do not depend on the choice of generic initial conditions.

Figure~\ref{ABmod55nv} shows the evolution of $|A|$ and $|B|$ when viscous effects are neglected ($\gamma=0$). For this case, purely oscillatory behaviour sets in for sufficiently large time.

\begin{figure}[!ht]
  \centerline{
    \includegraphics[width=.32\textwidth]{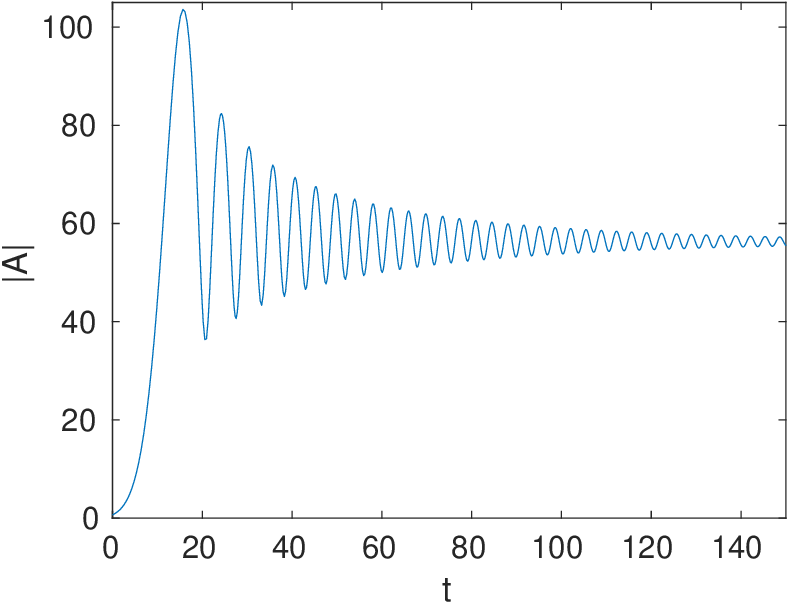}\hspace{30pt}
    \includegraphics[width=.32\textwidth]{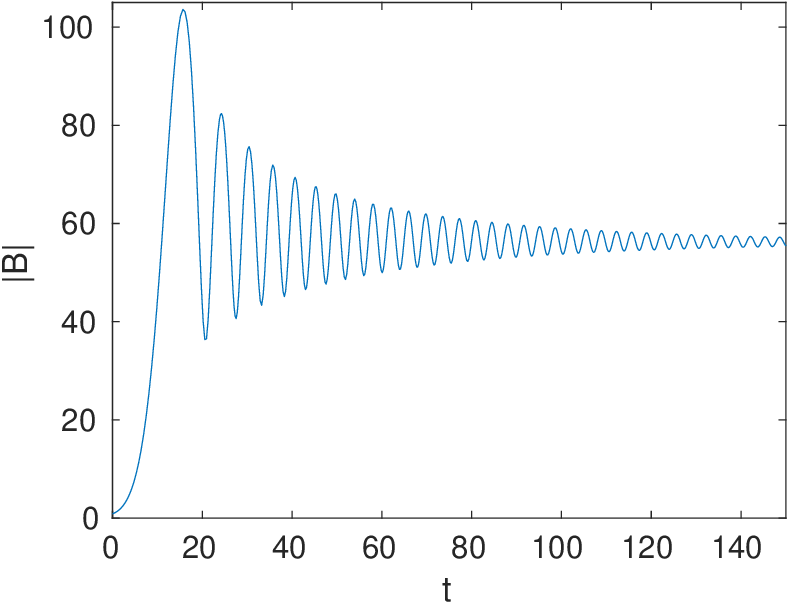} }
  \caption{Evolution of $|A|$ and $|B|$ with no viscosity for $k=1$, $a=0.55$, $\epsilon=0.05$, $\Gamma=0.01$, $\gamma=0$. }
  \label{ABmod55nv}
  \end{figure}

For the cases with non-zero viscosity (Fig.~\ref{ABmod551},~\ref{ABmod451},~\ref{ABmod552}), the long-time trend consists of oscillatory behavior superimposed to a slow decay (for $|A|$) or a slow growth (for $|B|$), both due solely to the viscous effects. (See the rightmost graphs of $|B|e^{-\gamma t}$ in figures~\ref{ABmod551},~\ref{ABmod451},~\ref{ABmod552}.)

 \begin{figure}[!ht]
  \centerline{
    \includegraphics[width=.32\textwidth]{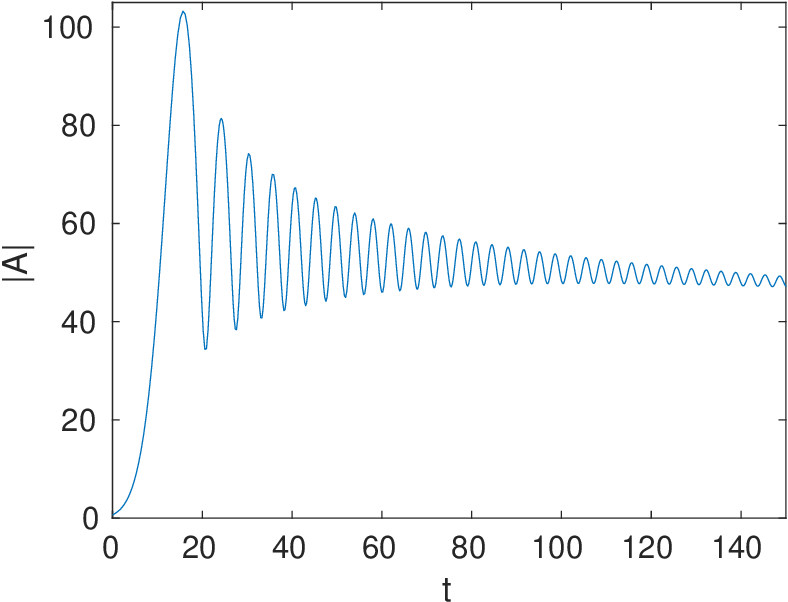}\hspace{5pt}
    \includegraphics[width=.32\textwidth]{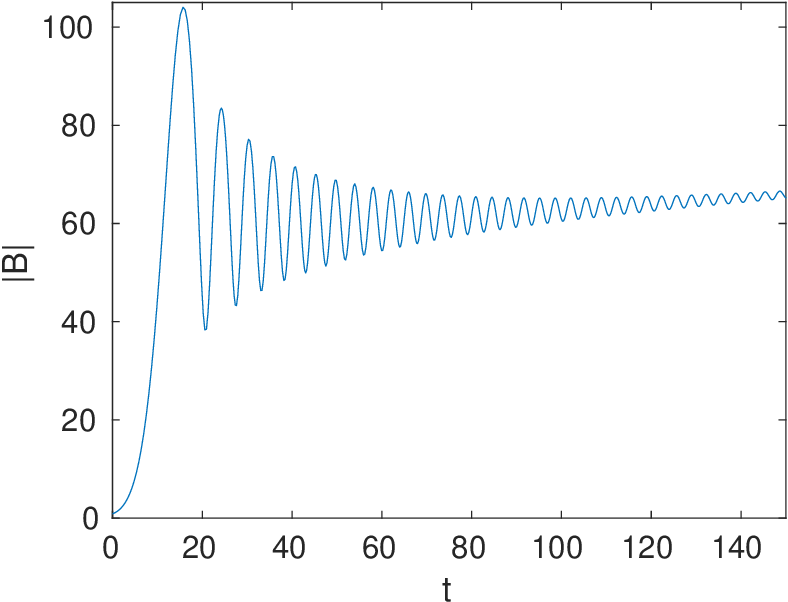}\hspace{5pt}
 \includegraphics[width=.32\textwidth]{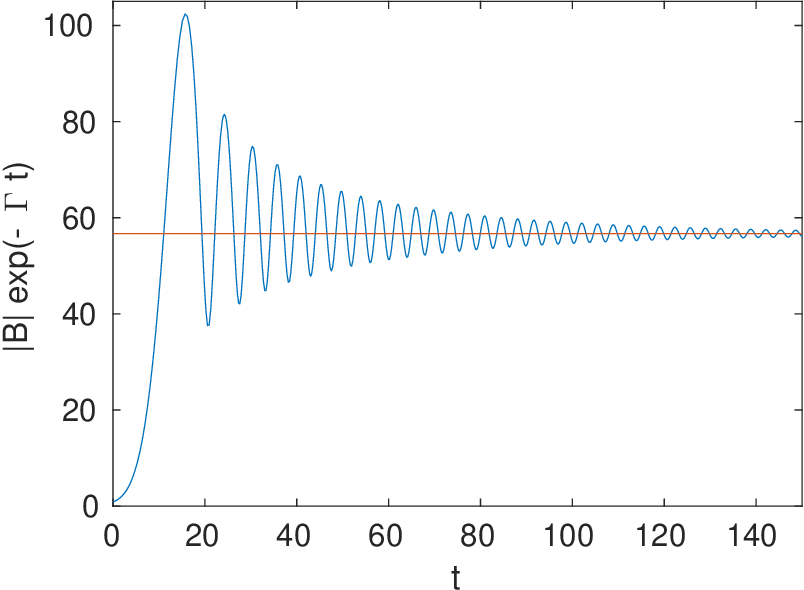}
  }
  \caption{Evolution of $|A|$, $|B|$ and $|B|\e^{-\gamma t}$ for $k=1$, $a=0.55$, $\epsilon=0.05$, $\Gamma=0.01$, $\gamma=0.001$.}
  \label{ABmod551}
  \end{figure}

\begin{figure}[!ht]
  \centerline{
    \includegraphics[width=.32\textwidth]{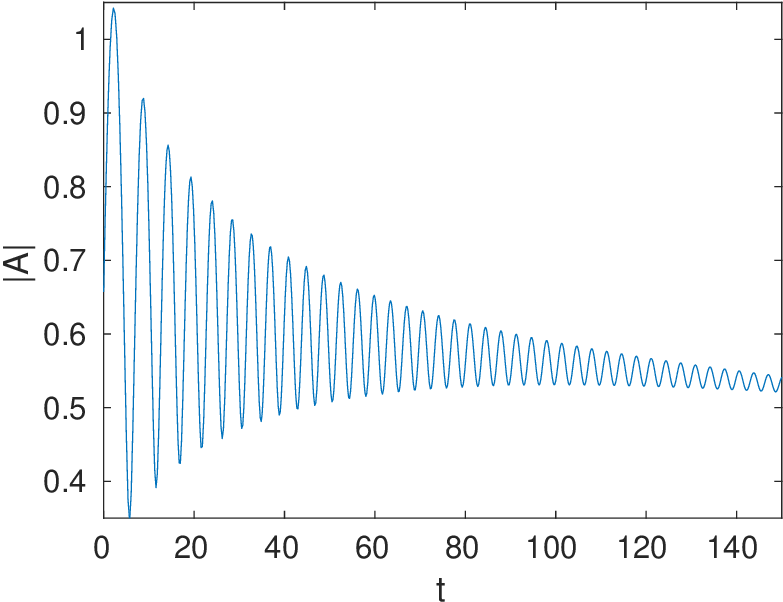}\hspace{5pt}
    \includegraphics[width=.32\textwidth]{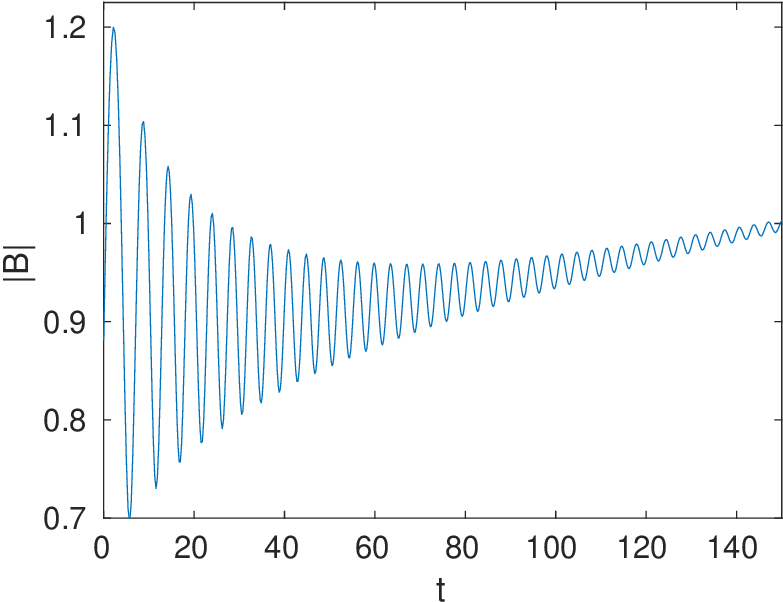}\hspace{5pt}
    \includegraphics[width=.32\textwidth]{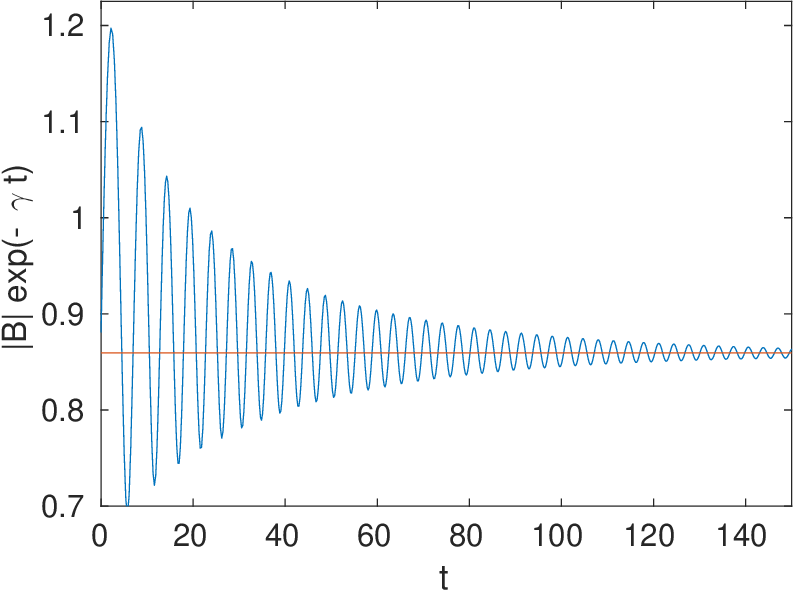}
  }
  \caption{Evolution of $|A|$, $|B|$ and $|B|\e^{-\gamma t}$ for $k=1$, $a=0.45$, $\epsilon=0.05$, $\Gamma=0.01$, $\gamma=0.001$.}
  \label{ABmod451}
  \end{figure}
\begin{figure}[!ht]
  \centerline{
    \includegraphics[width=.32\textwidth]{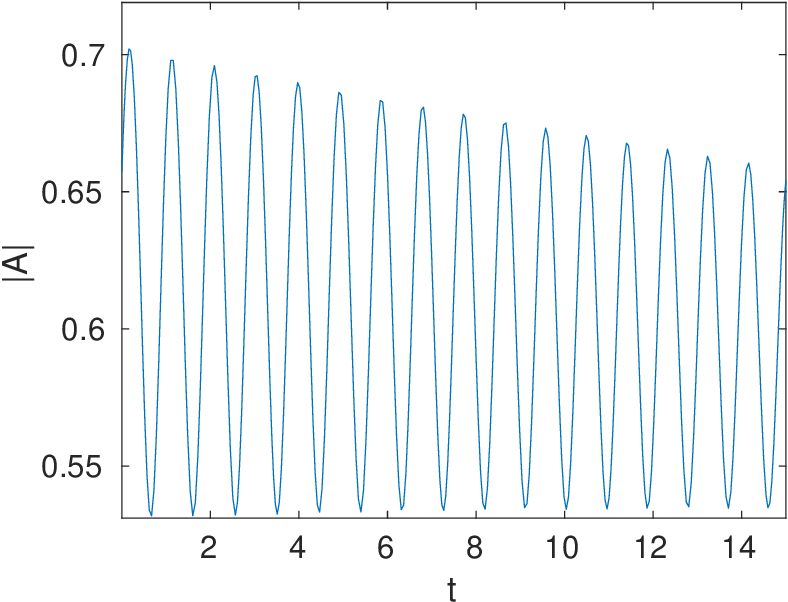}\hspace{5pt}
    \includegraphics[width=.32\textwidth]{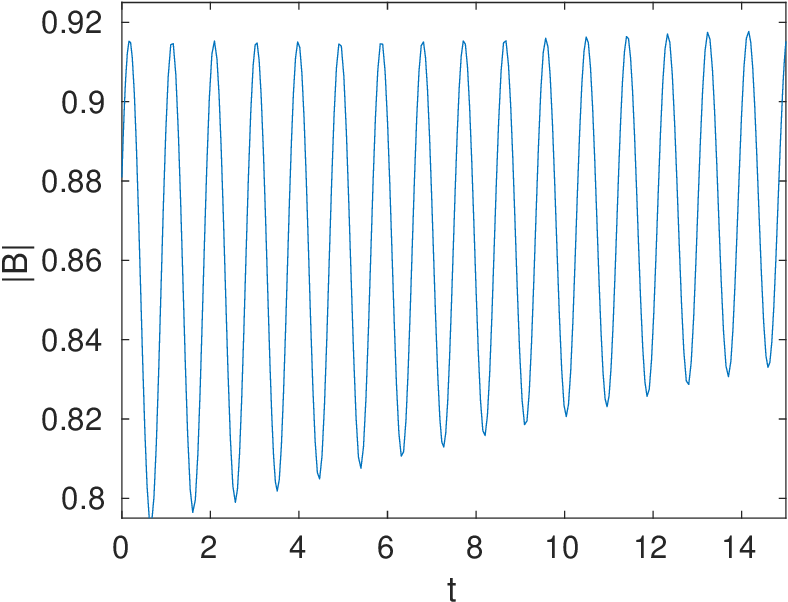}\hspace{5pt}
    \includegraphics[width=.32\textwidth]{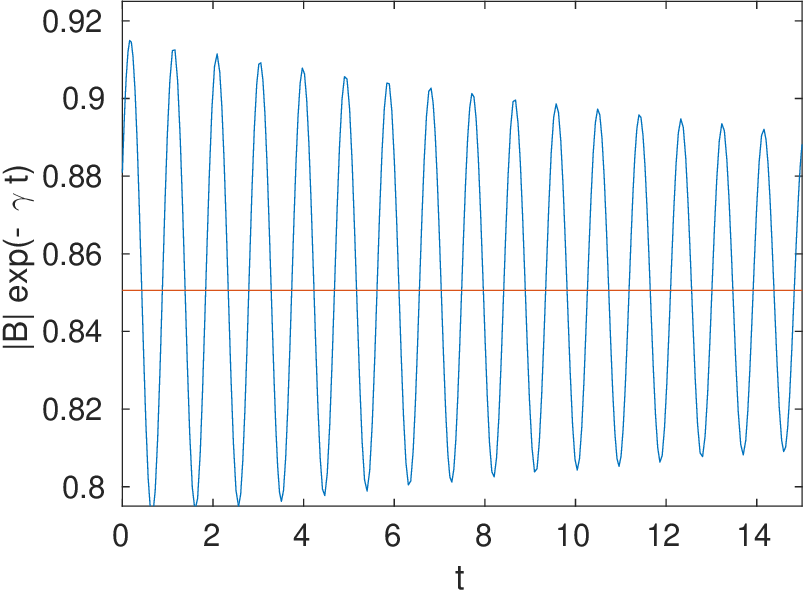}
  }
  \caption{Evolution of $|A|$, $|B|$ and $|B|\e^{-\gamma t}$ for $k=2$, $a=0.55$, $\epsilon=0.05$, $\Gamma=0.01$, $\gamma=0.002$.}
  \label{ABmod552}
  \end{figure}

Figure~\ref{ABmod551}, where $k=1, a=0.55$, shows an initial transient exponential growth, in contrast to Figure~\ref{ABmod552} ($k=2, a=0.55$) and Figure~\ref{ABmod451} ($k=1, a=0.45$) , for which the initial dynamics is oscillatory.  Note that the quantity $4a^2-k^2$ is positive (equal to $0.21$) , when $k=1, a=0.55$, and negative in the other two cases ($-2.79$ and $-0.19$, respectively), suggesting that for sufficiently high amplitude a finite number of modes will exhibit a {\em transient} Benjamin-Feir (BF) type-instability (see beginning of~\S~\ref{transient} for more discussion the BF criterion $4a^2-k^2>0$. Before investigating this regime in more detail, we make some observations about the effects of diffusion on long-time evolution.

\subsubsection*{Diffusion effects on long-time dynamics}
Diffusion can be modeled by adding the higher order term $-\ri \epsilon^2 \delta u_{xx}$, with $\delta>0$, to equation~\eqref{vHONLS}. Linearizing~\eqref{vHONLS} with the added diffusive term changes the matrix $M$ to

\[
M_\delta=\begin{pmatrix} -\gamma-\epsilon^2\delta k^2 & \frac{1}{2}k^2(-2+\epsilon k)  & 0 & 0 \\
-\frac{1}{2}k^2(-2+\epsilon k)   & -\gamma -\epsilon^2\delta k^2 & 0 & 0 \\
0 & 0  & \gamma -\epsilon^2\delta k^2 & -\frac{1}{2}k^2(2+\epsilon k)   \\
0  & 0  & \frac{1}{2}k^2(2+\epsilon k)   & \gamma -\epsilon^2\delta k^2
\end{pmatrix}.
\]

The eigenvalues of $M_\delta$ are 
\[
(-\epsilon^2\delta k^2+ \gamma) \pm \frac{ik^2}{2} (2+\epsilon k), \qquad -(\epsilon^2\delta k^2+\gamma) \pm \frac{ik^2}{2} (2-\epsilon k).
\]
\begin{prop}
  \label{prop2p2}
  Addition of the higher-order diffusive term $-\ri \epsilon^2 \delta u_{xx}$, with $\delta>0$, will stabilize all but a finite number of modes. For $\epsilon \delta$ large enough, all modes will become stable.
\end{prop}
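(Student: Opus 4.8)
The plan is to reduce the stability of each Fourier mode to the location of the eigenvalues of the constant matrix $M_\delta$, treating the decaying coefficient $E(t)$ exactly as in the proof of Proposition~\ref{largetime}. Since the eigenvalues of $M_\delta$ are already recorded above, the first step is simply to inspect their real parts. The pair $-(\epsilon^2\delta k^2+\gamma)\pm\frac{ik^2}{2}(2-\epsilon k)$ always has negative real part because $\epsilon^2\delta k^2$ and $\gamma=2\epsilon k\Gamma$ are both nonnegative, so the only potentially destabilizing eigenvalues are the pair with real part
\[
-\epsilon^2\delta k^2+\gamma=\epsilon k\,(2\Gamma-\epsilon\delta k).
\]
As $\epsilon,k>0$, this is negative precisely when $k>2\Gamma/(\epsilon\delta)$; for every such $k$ the matrix $M_\delta$ is Hurwitz.

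Next I would promote this eigenvalue information to the full nonautonomous system~\eqref{lsys1} with $M$ replaced by $M_\delta$. Because $M_\delta$ is a direct sum of two blocks of the form $aI+bJ$ with $J=\begin{pmatrix}0&1\\-1&0\end{pmatrix}$, its exponential satisfies $\|\e^{M_\delta t}\|\le C\e^{-\mu t}$ with $\mu=\epsilon^2\delta k^2-\gamma>0$ whenever $k>2\Gamma/(\epsilon\delta)$, with no polynomial corrections to worry about. Since $E(t)=4a^2\e^{-2\Gamma t}$ gives $\int_0^\infty\|E(t)N\|\,\rd t=2a^2\|N\|/\Gamma<\infty$, a standard variation-of-parameters plus Gr\"onwall argument applied to $\vec u_k(t)=\e^{M_\delta t}\vec u_k(0)+\int_0^t\e^{M_\delta(t-s)}E(s)N\,\vec u_k(s)\,\rd s$ then yields $\|\vec u_k(t)\|\le C'\e^{-\mu t}\to0$. (Alternatively, one may cite the same Bellman theorems used in Proposition~\ref{largetime}.) Thus every mode with $k>2\Gamma/(\epsilon\delta)$ decays and is stabilized.

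Finally I would use the discreteness of the admissible wave numbers. Since $u$ is $L$-periodic, the modes are indexed by $k_n=2\pi n/L$, $n=1,2,\dots$, so only the finitely many modes with $k_n\le 2\Gamma/(\epsilon\delta)$ can fail to be Hurwitz; this proves the first claim. For the second, it suffices that even the lowest mode obey the stability inequality, i.e.\ $2\pi/L>2\Gamma/(\epsilon\delta)$, which holds as soon as $\epsilon\delta>\Gamma L/\pi$; for such $\epsilon\delta$ all modes are stable.

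The only genuine content is the perturbation step: confirming that the integrable, exponentially small term $E(t)N$ cannot destabilize a mode whose frozen matrix $M_\delta$ is Hurwitz. The Gr\"onwall estimate (or the cited Bellman result) disposes of this cleanly, so the main task is really bookkeeping — getting the sign of the critical real part right and translating the continuous threshold $k>2\Gamma/(\epsilon\delta)$ into the parameter condition $\epsilon\delta>\Gamma L/\pi$.
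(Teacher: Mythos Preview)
Your argument is correct and follows essentially the same route as the paper: both proofs reduce to checking the sign of the real part $-\epsilon^2\delta k^2+\gamma=\epsilon k(2\Gamma-\epsilon\delta k)$ of the relevant eigenvalues of $M_\delta$, with the Bellman/Gr\"onwall perturbation step (which you make explicit and the paper leaves implicit) handling the decaying $E(t)N$ term. The only discrepancy is the threshold for stabilizing \emph{all} modes: the paper treats the smallest nonzero wave number as $k=1$ and obtains $\epsilon\delta>2\Gamma$ (this is the condition used throughout the numerics, e.g.\ $\epsilon\delta=2.1\Gamma$), whereas your $\epsilon\delta>\Gamma L/\pi$ comes from taking $k_1=2\pi/L$; both are correct within their respective normalizations.
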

\begin{proof}
  Since $\gamma =2\epsilon k \Gamma$, where $0<\Gamma \ll 1$, the term $(-\epsilon^2\delta k^2+ \gamma)=\epsilon k(-\epsilon \delta k+2\Gamma)$ can be made negative. If we fix $\delta>0$, only a finite number of modes with $|k|\leq \lfloor 2\Gamma/\delta\epsilon \rfloor$ will be unstable, where $\lfloor x \rfloor$ denotes the integer part of $x$ when $x>0$ . In particular, choosing $\delta$ such that $\epsilon \delta > 2\Gamma$ will stabilize all the modes.
\end{proof}

\begin{rem}
The term $-\ri \epsilon^2 \delta u_{xx}$ can be obtained from the model presented in~\cite{didyza2008},  by carrying out the asymptotic derivation of the dissipative terms one order beyond where the Dysthe equation appears. It also appears in the model derived in~\cite{eeltink2019} as a higher-order term leading to high-frequency cut-off when viscosity is present. In fact, Proposition~\ref{prop2p2} characterizes how a small amount of diffusion leads to more physically realistic behavior for large times when viscous effects are present. Figure~\ref{ABmod451D} shows the stabilizing effect of diffusion for the same initial conditions and parameter values as in Figure~\ref{ABmod451}, with $\epsilon \delta= 2.1\Gamma$.
\end{rem}

\begin{figure}[!ht]
  \centerline{
    \includegraphics[width=.32\textwidth]{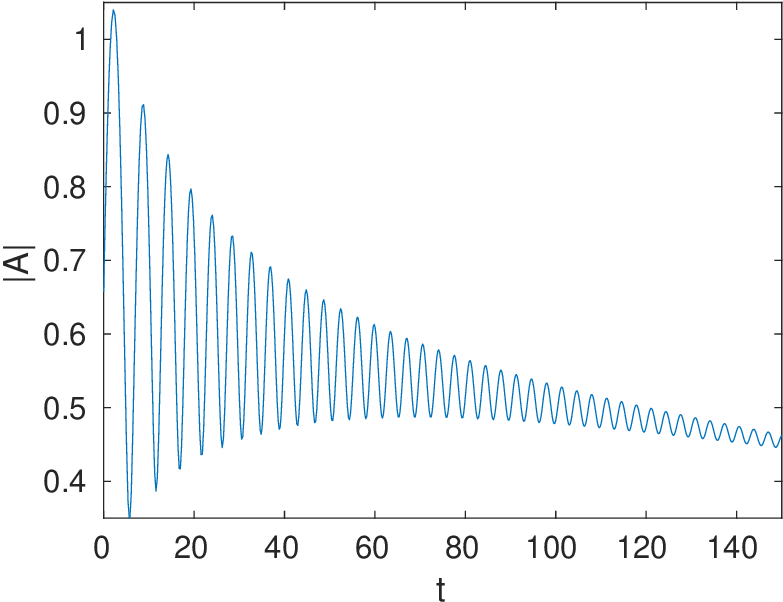}\hspace{30pt}
    \includegraphics[width=.32\textwidth]{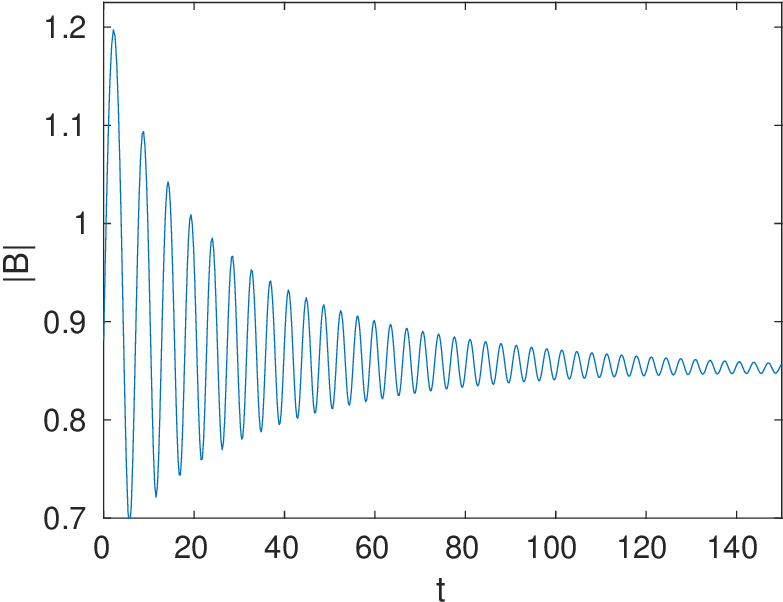} }
  \caption{Evolution of $|A|$ and $|B|$ with diffusion for $k=1$, $a=0.45$, $\epsilon=0.05$, $\Gamma=0.01$, $\gamma=0.001$, $\epsilon \delta=0.021$.}
  \label{ABmod451D}
  \end{figure}

As discussed earlier, multiplying $v(x,t)$ by $e^{-\Gamma t}$ is equivalent to subtracting $\Gamma$ from the real parts of the eigenvalues of $M_\delta$, enhancing stabilization. In particular, we have

\begin{cor}
If $\epsilon \delta>2\Gamma$, then the damped Stokes wave $u^\Gamma_0$ is a linearly stable solution of the viscous HONLS with added diffusion.
\end{cor}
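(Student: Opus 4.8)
The plan is to reduce the stability question to the spectrum of the constant‑in‑$t$ matrix $M_\delta$, exactly as in the long‑time analysis of system~\eqref{lsys1}, and then to account for the overall damping factor $\e^{-\Gamma t}$ that was divided out in passing from the perturbed Stokes wave to $v$. Recall that the genuine perturbation of $u_a^\Gamma$ has the form $u_a^\Gamma\,\alpha v$, whose modulus is $a\,\e^{-\Gamma t}|v|$; hence it suffices to show that, under $\epsilon\delta>2\Gamma$, every solution $\vec u_k(t)$ of the diffusive analogue of~\eqref{lsys1} (with $M$ replaced by $M_\delta$) decays, and then to observe that the extra $\e^{-\Gamma t}$ only accelerates this decay.

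First I would record that the worst real part among the eigenvalues of $M_\delta$ is $-\epsilon^2\delta k^2+\gamma=\epsilon k\,(2\Gamma-\epsilon\delta k)$, the other two real parts being $-(\epsilon^2\delta k^2+\gamma)<0$. Since the admissible wave numbers satisfy $k\ge 1$, the hypothesis $\epsilon\delta>2\Gamma$ gives $\epsilon\delta k\ge\epsilon\delta>2\Gamma$, so $2\Gamma-\epsilon\delta k<0$ and every eigenvalue of $M_\delta$ has strictly negative real part for every mode. This is precisely the content of Proposition~\ref{prop2p2}, which I would invoke directly. Combined with the fact that $E(t)=4a^2\e^{-2\Gamma t}$ satisfies $\int_0^\infty\|E(t)N\|\,\rd t<\infty$, the integrable‑perturbation argument already used for system~\eqref{lsys1} then forces $\vec u_k(t)\to 0$ exponentially. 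Reconstructing $v$ from the $\vec u_k$ and multiplying by $\e^{-\Gamma t}$ shows the perturbation of $u_a^\Gamma$ decays, so the damped Stokes wave is linearly stable.

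The step that needs the most care is upgrading the Bellman‑type statement used earlier — which, as quoted (Ch.~2, Theorem~7 of~\cite{bellman}), produces for each eigenvalue one solution with Lyapunov exponent equal to its real part — to the assertion that \emph{every} solution decays. Because $M_\delta$ is constant in $t$ with all eigenvalues in the open left half‑plane, one has a uniform bound $\|\e^{M_\delta t}\|\le C\e^{-\mu t}$ with $\mu>0$; feeding this and the integrability of $E(t)N$ into the variation‑of‑constants formula $\vec u_k(t)=\e^{M_\delta t}\vec u_k(0)+\int_0^t \e^{M_\delta(t-s)}E(s)N\,\vec u_k(s)\,\rd s$ and applying Gronwall's inequality yields $\|\vec u_k(t)\|\le C'\e^{-\mu t}$ for all solutions. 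A secondary point to verify is the quantization of $k$: for an $L$‑periodic envelope the wave numbers are discrete with smallest nonzero value $k_{\min}$, and the clean threshold $\epsilon\delta>2\Gamma$ corresponds to the normalization $k_{\min}\ge 1$; in general the condition reads $\epsilon\delta\,k_{\min}>2\Gamma$. Once these are settled the corollary follows, with the $\e^{-\Gamma t}$ factor providing an additional margin of stability beyond what Proposition~\ref{prop2p2} already guarantees.
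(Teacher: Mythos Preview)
Your proposal is correct and follows essentially the same line as the paper: the corollary is stated immediately after the sentence ``multiplying $v(x,t)$ by $\e^{-\Gamma t}$ is equivalent to subtracting $\Gamma$ from the real parts of the eigenvalues of $M_\delta$, enhancing stabilization,'' and no further proof is given beyond invoking Proposition~\ref{prop2p2}. Your write-up is in fact more careful than the paper on two points it leaves implicit --- the upgrade from Bellman's Theorem~7 (existence of one solution per eigenvalue with the right Lyapunov exponent) to decay of \emph{all} solutions via variation-of-constants and Gronwall, and the role of the quantization $k_{\min}\ge 1$ in making $\epsilon\delta>2\Gamma$ the correct threshold.
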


\subsection{Transient Benjamin-Feir Instability}
\label{shortmoder}

For fixed $k\geq 0$, if the amplitude $a$ of the damped Stokes wave $u^\Gamma_a$ is sufficiently large then the amplitudes of both modes, $|A|$ and $|B|$, grow exponentially until a time $t^*$ where an inflection point occurs; at $t^*$, the exponential growth begins to saturate. For $t>t^*$, the modes exhibit oscillatory behavior superimposed with a slow decay or growth induced by the viscous effects. 
Figure~\ref{BShort} (left) shows the evolution of $|B|$ for an example where such transient instability occurs.  This is compared with the purely oscillatory dynamics in Figure~\ref{BShort} (right), corresponding to the solution of the linearization~\eqref{linNLS} about a damped plane wave with amplitude below the threshold for initial instability. We remark that $|A|$ shows a similar behavior for the same choice of parameters.

\begin{figure}[!ht]
  \centerline{
    \includegraphics[width=.3\textwidth]{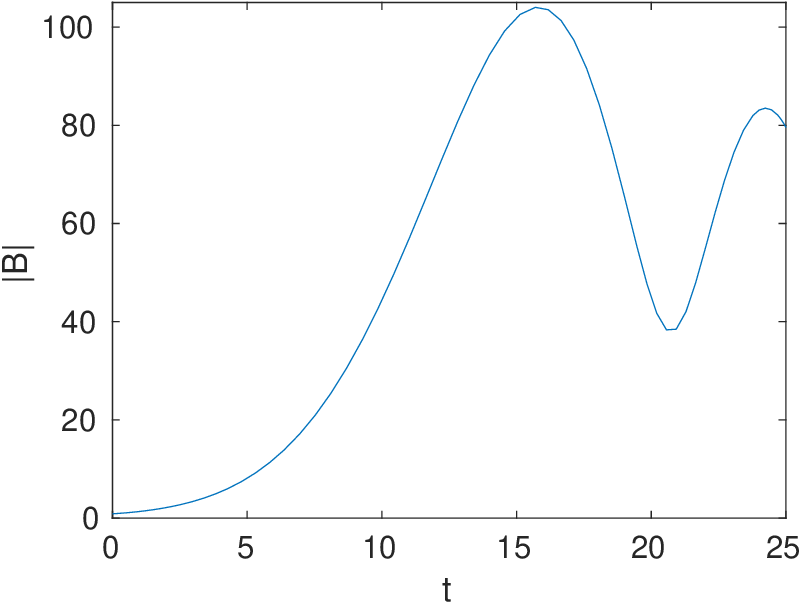}
    \hspace{50pt}
    \includegraphics[width=.3\textwidth]{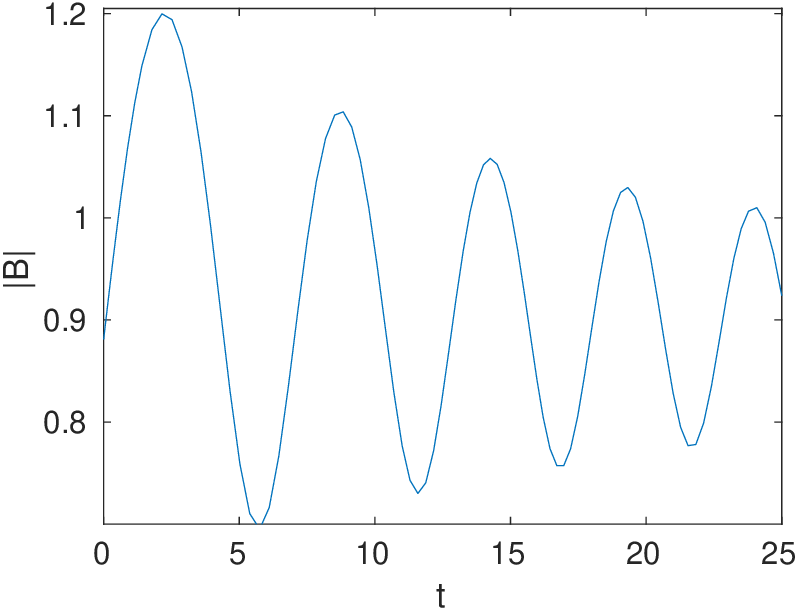} 
  }
  \caption{Initial evolution of $|B|$ for $a=0.55$ (left), exhibiting an initial exponential growth; and for $a=0.45$ (right), exhibiting purely oscillatory dynamics; with $k=1$, $\Gamma=0.01$, $\gamma=0.001$, $\epsilon=0.05.$}
  \label{BShort}
  \end{figure}

Note how selecting an amplitude slightly above or slightly below the apparent threshold $a\sim 0.5$ leads to substantially different initial growths, with $|B|$ increasing to its first maximum value of about $100$  when $a=0.55$ and of about $1.2$ when $a=0.45$.  Also, in the presence of viscosity, the first maximum is the absolute maximum of $|B|$ over a very long time interval.  

We also remark that multiplying by $\e^{-\Gamma t}$ does not change qualitatively the initial time behavior and the transition to oscillatory dynamics, as shown in Figure~\ref{|B|EG} for the same parameter values as in Figure~\ref{BShort} (left). In fact, the first maxima of the initially unstable modes are also their global maxima and will determine the largest deviation of perturbed solution from $u^\Gamma_a$. Thus, the transient dynamics due to the initial Benjamin-Feir instability is an important component of the linear stability analysis of the damped Stokes wave. 

\begin{figure}[!ht]
   \centerline{
    \includegraphics[width=.34\textwidth]{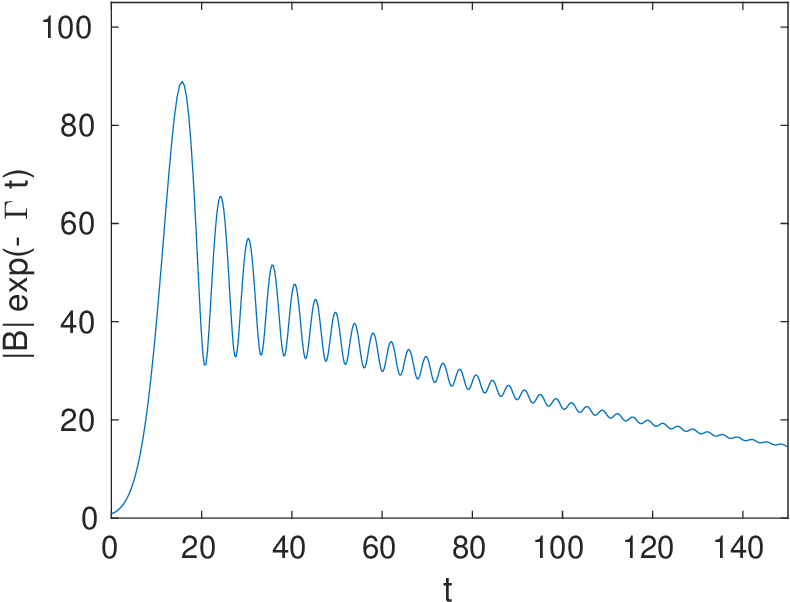}
    }
   \caption{Evolution of $|B|\e^{-\Gamma t}$ for $k=1$, $a=0.55$, $\epsilon=0.05$, $\Gamma=0.01$, $\gamma=0.001$.}
   \label{|B|EG}
   \end{figure}

\subsubsection{Qualitative Analysis}
\label{transient}

We next examine the transition from exponential growth to oscillatory dynamics for solutions of the linearization of the Dysthe equation~\eqref{linNLS} with and without viscosity. (The discussion with added diffusion is similar and will be left to the reader.) In what follows, we will assume that the initial amplitude of the damped Stokes wave solution~\ref{dStokes}, satisfies an inequality analogous to the Benjamin-Fair instability criterion $4a^2-k^2>0, k=1, \ldots M, M\in \mathbb{N}$ for the original NLS model ($\epsilon=0$.). This guarantees that $u_a^\Gamma$ is initially unstable with $M$ unstable modes. The case $M=1$ is chosen to illustrate the initial exponential growth in the figures in this section.

To study the transient instabiity,  it is convenient to substitute the complex expansion $v(x,t)=A(t)\e^{\ri kx}+B(t) \e^{-\ri kx}$ of the $k$-th Fourier mode of $v$ (where the subscript $k$ has been dropped) in~\eqref{linNLS} to obtain the following system
\begin{equation}
\label{ABeq}
\begin{split}
\ri A_t + [a(t)+b(t)]A + c(t) B^* & =0, \\
\ri B_t + [a(t)- b(t)]B + c(t) A^* & =0,
\end{split}
\end{equation}
where 
\[
a(t)=-k^2 +c(t), \qquad b(t)=\frac{\epsilon}{2} k^3 + 8 \epsilon k |u_a^\Gamma|^2 + \ri \gamma, \qquad c(t)=2|u_a^\Gamma|^2 (1 + \epsilon k),
\]
and where the superscript ${}^*$ denotes complex conjugation.

Rescaling $A= \displaystyle e^{\ri \int^t b(s) \rd s}C, B= \displaystyle e^{-\ri \int^t b(s) \rd s}D$, we get
\begin{equation}
\label{CDeq}
\begin{split}
\ri C_t + a(t)C + c(t) \e^{-\ri \int^t (b-b^*)(s) \rd s}D^* & =0, \\
\ri D_t + a(t)D + c(t) \e^{\ri \int^t (b-b^*)(s) \rd s}C^* & =0,
\end{split}
\end{equation}
with $b(t)-b^*(t)=2\ri\gamma.$

\subsubsection*{Non-viscous case}
When $\gamma=0$, system~\eqref{CDeq} reduces to 
\[
\begin{split}
\ri C_t + a(t)C + c(t) D^* & =0, \\
\ri D_t + a(t)D + c(t) C^* & =0.
\end{split}
\]
Setting $U=C+D^*$, $\ri V= C-D^*$, we arrive at the system
\[
\begin{split}
U_t -k^2 V & =0, \\
V_t -[-k^2+4a^2(1+\epsilon k) \e^{-2\Gamma t}]U & =0,
\end{split}
\]
and the equivalent second order differential equation for $U$
\begin{equation}
\label{2de0}
    U_{tt}+k^2[k^2-4a^2(1+\epsilon k)\e^{-2\Gamma t}]U=0.
\end{equation}
While $U$ is complex, equation~\eqref{2de0} has real coefficients, thus one can apply Sturmian theory as in~\cite{segurstab} to conclude that:
\begin{prop} 
\label{0transition}
When $\gamma=0$
\begin{itemize}
\item[-] If $\displaystyle 4a^2<\frac{k^2}{1+\epsilon k}$, then all solutions of~\eqref{2de0} are oscillatory.
\item[-]
If $\displaystyle 4a^2-\frac{k^2}{1+\epsilon k}>c^2$, for some non-zero constant $c$, then there exists a solution of~\eqref{2de0} which grows exponentially for $\displaystyle 0<t<\frac{1}{2\Gamma}\ln  \frac{4a^2(1+\epsilon k)}{k^2}$, and exhibits bounded oscillations  for $\displaystyle t\in \Big(\frac{1}{2\Gamma}\ln \frac{4a^2(1+\epsilon k)}{k^2}, +\infty\Big)$.
\end{itemize}
\end{prop}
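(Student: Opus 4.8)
The plan is to treat the scalar equation~\eqref{2de0}, $U_{tt}+q(t)U=0$ with $q(t)=k^2[k^2-4a^2(1+\epsilon k)\e^{-2\Gamma t}]$, entirely through the sign and monotonicity of its coefficient, using the Sturm comparison theorem together with an adiabatic energy estimate as the two main tools. I would first record the elementary structural facts: $q$ is strictly increasing in $t$ (since $\e^{-2\Gamma t}$ is decreasing and $q'(t)=8\Gamma k^2 a^2(1+\epsilon k)\e^{-2\Gamma t}>0$), $q(t)\to k^4>0$ as $t\to+\infty$, and, whenever $4a^2(1+\epsilon k)>k^2$, the coefficient $q$ vanishes at exactly one point, namely the $t^*=\frac{1}{2\Gamma}\ln\frac{4a^2(1+\epsilon k)}{k^2}$ appearing in the statement. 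This single sign change is what separates the two regimes.

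For the first bullet I would observe that the hypothesis $4a^2<k^2/(1+\epsilon k)$ is precisely the condition $q(0)>0$; since $q$ is increasing, $q(t)\ge q(0)=m^2>0$ for all $t\ge 0$. I would then compare equation~\eqref{2de0} with the constant-coefficient equation $W_{tt}+m^2 W=0$: by the Sturm comparison theorem the solution with the larger coefficient oscillates at least as fast, so $U$ has a zero between consecutive zeros of $W=\sin(mt)$, hence infinitely many zeros accumulating at $+\infty$. That is exactly what ``oscillatory'' means here.

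For the second bullet I would split the analysis at $t^*$. The strict-gap hypothesis $4a^2-k^2/(1+\epsilon k)>c^2$ guarantees both $t^*>0$ and, more usefully, that $-q(0)=k^2[4a^2(1+\epsilon k)-k^2]>k^2c^2(1+\epsilon k)>0$ is bounded away from zero. On $(0,t^*)$ one has $q<0$, so writing $p(t)=-q(t)>0$ the equation reads $U_{tt}=p(t)U$; choosing initial data with $U(0)>0$ and $U_t(0)\ge 0$ forces $U$, $U_t$, $U_{tt}$ all positive, so $U$ is monotonically increasing and convex. To upgrade ``increasing'' to ``exponentially growing'' I would note that on any compact $[0,T]\subset[0,t^*)$ we have $p(t)\ge p(T)=\mu^2>0$, and comparison with $W_{tt}=\mu^2 W$ yields a lower bound of the shape $U(t)\ge U(0)\cosh(\mu t)$, i.e.\ genuine exponential growth, with the rate $\mu$ degrading as $T\uparrow t^*$ — consistent with the inflection $U_{tt}(t^*)=0$ and the saturation of the growth at $t^*$ described earlier.

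For $t>t^*$ the coefficient is positive and increasing, and here I would invoke the adiabatic invariant $\mathcal{E}(t)=U_t^2+q(t)U^2$. A direct computation gives $\mathcal{E}_t=q'(t)U^2\ge 0$, and since $qU^2\le\mathcal{E}$ we obtain $(\ln\mathcal{E})_t\le q'/q=(\ln q)_t$; integrating from any $t_0>t^*$ gives $\mathcal{E}(t)\le\mathcal{E}(t_0)\,q(t)/q(t_0)$ and hence $U(t)^2\le\mathcal{E}(t)/q(t)\le\mathcal{E}(t_0)/q(t_0)$, a constant, so every solution is bounded beyond $t^*$. Because this bound holds for all solutions, the particular solution that grew on $(0,t^*)$ automatically becomes bounded for $t>t^*$, and no separate matching argument across $t^*$ is needed. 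Finally, on $[t_0,\infty)$ one has $q(t)\ge q(t_0)>0$, so the same Sturm comparison as in the first part produces infinitely many zeros; together with boundedness this is exactly the claimed ``bounded oscillations.'' The main obstacle is the growth phase: exponential growth over a \emph{finite} interval is only meaningful as a lower bound on compact subintervals whose rate deteriorates near $t^*$, so the delicate point is to phrase this precisely via the $\cosh$ comparison rather than to assert a single uniform exponential rate, and to confirm that the growing solution selected is the physically relevant one whose first maximum occurs near $t^*$.
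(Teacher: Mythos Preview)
Your proof is correct and follows the same route as the paper, which simply states that one can ``apply Sturmian theory as in~\cite{segurstab}'' without further detail. Your explicit Sturm comparison for the oscillatory regimes together with the adiabatic energy estimate $\mathcal{E}=U_t^2+q(t)U^2$ for boundedness on $(t^*,\infty)$ fills in the argument cleanly and is, if anything, more self-contained than the paper's one-line appeal to the reference.
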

\begin{rem}
Since $|A|=|C|$ and $|B|=|D|$, the same conclusions hold for solutions $A$ and $B$ (the coefficients of the $k$-th Fourier mode) of equation~\eqref{ABeq}. Thus, compared with the linearly damped NLS case studied in~\cite{segurstab}, the higher order terms in~\eqref{vHONLS} {\em increase the instability} of an initially unstable damped Stokes wave (by effectively increasing its amplitude by the factor $\sqrt{1+\epsilon k}>1$). Furthermore,  the time of onset of the oscillatory behavior is delayed. (Compare $\displaystyle t^*_0=\frac{1}{2\Gamma}\ln \frac{4a^2(1+\epsilon k)}{k^2}$ with its expression at $\epsilon=0$.)
\end{rem}

\subsubsection*{Viscous case}
When $\gamma \not=0$, system~\eqref{CDeq} becomes
\[
\begin{split}
\ri C_t + a(t)C + c(t) \e^{2\gamma t} D^* & =0, \\
\ri D_t + a(t)D + c(t) \e^{-2\gamma t}  C^* & =0.
\end{split}
\]
With the substitutions $U=C \e^{-\gamma t}+D^* \e^{\gamma t}$, $\ri V=C \e^{-\gamma t}-D^* \e^{\gamma t}$, we obtain
\[
\begin{split}
U_t +(-k^2 + \ri \gamma) V & =0, \\
V_t -[-k^2++ \ri \gamma + 4a^2(1+\epsilon k) \e^{-2\Gamma t}]U & =0,
\end{split}
\]
or equivalently
\begin{equation}
\label{2deg}
    U_{tt}+(k^2 - \ri \gamma) [k^2 -\ri \gamma-4a^2(1+\epsilon k)\e^{-2\Gamma t}]U=0.
\end{equation}
Note that, since $|A|=\e^{-\gamma t} |C|$ and $|B|=\e^{\gamma t} |D|$, it follows from Section~\ref{largetime} that the long-time dynamics of $|C|$ and $|D|$ will be oscillatory.  

Making the change of variable $z=-\Gamma t$ and letting $\lambda^2=-4(k^2-\ri \gamma) (1+\epsilon k) a^2/\Gamma^2$, $\nu^2=-(k^2-\ri \gamma)^2/\Gamma^2$, equation~\eqref{2deg} is cast in one of the standard forms of Bessel differential equations (see {\bf 10.13.6} in~\cite{OLBC}):
\[
U_{zz}+(\lambda^2 \e^{2z}-\nu^2)U=0,
\]
whose solutions can be written as $\mathcal{C}_\nu(\lambda \e^z)$, where $\mathcal{C}_\nu$ is any of the cylinder Bessel functions of order $\nu$---$Y_\nu$, $J_v$ and the Hankel functions $H_\nu^{(1)}, H_\nu^{(2)}$---  or linear combinations thereof.

By using the limiting forms of the cylinder Bessel functions
$\displaystyle \mathcal{C}_\nu (z)  \sim (\tfrac{1}{2} z)^{\pm\nu}$ for small argument $z$ and fixed order $\nu$ with $\Re(\nu)>0$ (see {\bf 9.1.7–9.1.9} in ~\cite{AS}), we obtain the long-time asymptotic expression
\[
U\sim \left(\tfrac{1}{2} \lambda \e^{-\Gamma t}\right)^{\pm(\gamma -\ri k^2)/\Gamma}\sim \e^{\mp (\gamma -\ri k^2)t},
\]
describing oscillatory behavior superimposed upon slow exponential growth or decay due to the viscous term. 
\smallskip

To our knowledge, there is no general ``Sturmian" theory for 
second order equations with complex variable coefficients. See  E. Hille's book~\cite{hille1976ordinary} for comparison theorems and non-oscillation theory for such equations in the complex domain, and Barrett's survey~\cite{barrett} for the case of real independent variable. However, for equation~\eqref{2deg}, we can prove the following
\begin{thm}
\label{gtransition}
Let $\gamma\not=0$ and let $\displaystyle t^*_0=\frac{1}{2\Gamma}\ln \frac{4a^2(1+\epsilon k)}{k^2}$ be the transition time for the zero-viscosity case from Proposition~\ref{0transition}.
\begin{itemize}
\item[-] If $\displaystyle 4a^2<\frac{k^2}{1+\epsilon k}$ , then all solutions of~\eqref{2deg} are oscillatory.
\item[-]
If $\displaystyle 4a^2-\frac{k^2}{1+\epsilon k}>c^2$, $c\not=0$, then for sufficiently small  $\gamma>0$ there  exists a solution of~\eqref{2deg} which: 1. grows exponentially for $0<t<t^*_\gamma$; and  2.  exhibits a  predominantly oscillatory dynamics, superimposed upon exponential growth or decay at rate $\gamma$, for $t>t^*_\gamma.$ 

\item[-] The transition time $t^*_\gamma$ (defined as the location of the first inflection point of $|U|$), satisfies
 $t^*_0<t^*_\gamma \leq t^*_0 + C \gamma^2+O(\gamma^4)$, where  $\displaystyle C=\frac{4[2a^2(1+\epsilon k)-k^2]^2 t^*_0+1}{\Gamma k^2}$. 
\end{itemize}
\end{thm}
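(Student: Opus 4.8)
The obstacle flagged just above the theorem is the absence of a Sturmian oscillation theory for~\eqref{2deg}, whose coefficient is complex. My plan is to remove the complex coefficient by passing to modulus--phase variables. Writing $\Omega(t)$ for the coefficient of $U$ in~\eqref{2deg}, so that $U_{tt}=-\Omega U$ with $\Re\Omega=k^2(k^2-\Phi)-\gamma^2$ and $\Im\Omega=\gamma(\Phi-2k^2)$, where $\Phi(t)=4a^2(1+\epsilon k)\e^{-2\Gamma t}$ (note $\Phi(t^*_0)=k^2$), I set $U=R\,\e^{\ri\theta}$ with $R=|U|\ge 0$. Separating the real and imaginary parts of $U_{tt}=-\Omega U$ gives the exact pair
\[
R_{tt}=\bigl[\theta_t^{\,2}-\Re\Omega\bigr]R,\qquad \frac{\rd}{\rd t}\bigl(R^2\theta_t\bigr)=\gamma\,(2k^2-\Phi)\,R^2 .
\]
The second identity is the key simplification: the phase momentum $J:=R^2\theta_t$ is driven only at order $\gamma$. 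The first identity says that the inflection points of the amplitude $|U|=R$ are exactly the zeros of the real effective potential $Q(t):=\Re\Omega-\theta_t^{\,2}$, which is the single object I will track throughout.

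For the first bullet I would argue that when $4a^2<k^2/(1+\epsilon k)$ the amplitude never leaves the oscillatory regime. Here $\Phi(t)\le\Phi(0)=4a^2(1+\epsilon k)<k^2$, so $\Re\Omega=k^2(k^2-\Phi)-\gamma^2$ is bounded below by a positive constant for all $t\ge 0$ once $\gamma$ is small; equivalently, the turning point of~\eqref{2deg} (the zero of $\Re\Omega$) lies at negative time. Using the exact reduction to Bessel's equation already recorded in the excerpt, every solution is a cylinder function $\mathcal C_\nu(\lambda\e^{-\Gamma t})$, and on $[0,\infty)$ the argument stays on the oscillatory side of the turning point; combined with the $\e^{\mp(\gamma-\ri k^2)t}$ asymptotics this gives bounded oscillations superimposed on a slow $\e^{\pm\gamma t}$ drift, with no exponential-growth phase.

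For the second bullet, under $4a^2-k^2/(1+\epsilon k)>c^2$ one has $\Phi(0)>k^2$, so $\Re\Omega(0)<0$ for small $\gamma$, while $\Re\Omega(t)\to k^4-\gamma^2>0$; since $\Phi$ is strictly decreasing, $\Re\Omega$ is strictly increasing and vanishes exactly once. Starting from the real growing solution $U_0$ of the $\gamma=0$ problem of Proposition~\ref{0transition} (so $J(0)=0$), a perturbation argument in $\gamma$ shows the $O(\gamma)$ correction to $U$ is purely imaginary and the $O(\gamma^2)$ correction purely real; hence $R=U_0+O(\gamma^2)$ and $J=O(\gamma)$, so $\theta_t^{\,2}=O(\gamma^2)$. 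The effective potential $Q=\Re\Omega-\theta_t^{\,2}$ is then negative on an initial interval—forcing $R_{tt}$ and $R$ to share sign, i.e.\ convex (exponential-type) growth of $|U|$—and positive afterwards, producing the oscillatory regime. The transition time $t^*_\gamma$ is the first zero of $Q$, equivalently the first solution of $\Re\Omega(t)=\theta_t(t)^2$.

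The quantitative estimate is the heart of the matter. Since the first zero of $Q$ moves by $O(\gamma^2)$, I get $t^*_\gamma=t^*_0+O(\gamma^2)$ immediately, and because $\Re\Omega(t^*_0)=-\gamma^2<0$ with $\Re\Omega$ increasing, $t^*_\gamma>t^*_0$. To pin the coefficient I would Taylor-expand $\Re\Omega$ at $t^*_0$, using $\Re\Omega(t^*_0)=-\gamma^2$ and $(\Re\Omega)'(t^*_0)=2\Gamma k^4$ (from $\Phi'(t^*_0)=-2\Gamma k^2$), which turns $\Re\Omega(t^*_\gamma)=\theta_t(t^*_\gamma)^2$ into $2\Gamma k^4\,(t^*_\gamma-t^*_0)=\gamma^2+\theta_t(t^*_0)^2+O(\gamma^4)$. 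The remaining task is to bound the accumulated phase velocity $\theta_t(t^*_0)=J(t^*_0)/R(t^*_0)^2$, with $J(t^*_0)=\gamma\int_0^{t^*_0}(2k^2-\Phi)\,R^2\,\rd t$, using the monotonicity of $R$ on the growth interval together with $|2k^2-\Phi|\le 2\,|2a^2(1+\epsilon k)-k^2|$ to extract a bound of the form $\theta_t(t^*_0)^2\le\gamma^2\cdot 4[2a^2(1+\epsilon k)-k^2]^2(\cdots)$; feeding this into the displayed relation yields the stated $C$. I expect this last estimate to be the main obstacle: a crude bound of the integral produces a factor $(t^*_0)^2$, and keeping the power of $t^*_0$ linear—so as to obtain exactly $\bigl(4[2a^2(1+\epsilon k)-k^2]^2 t^*_0+1\bigr)/(\Gamma k^2)$—requires exploiting the exponential concentration of the growing solution $R$ near $t^*_0$, which is where the delicacy of the argument lies.
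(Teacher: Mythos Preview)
Your approach coincides with the paper's: it too passes to polar coordinates $U=\rho\,\e^{\ri\theta}$, obtains exactly your system (their $G(t;\gamma)=f(t)+\theta_t^2+\gamma^2$ is your $-Q$), invokes the Implicit Function Theorem to locate $t^*_\gamma$ as the zero of $G$ near $t^*_0$, and then linearizes $G$ at $t^*_0$ to extract the shift. The paper does not treat the first bullet at all (its proof opens by assuming $4a^2>k^2/(1+\epsilon k)$), so your Bessel-function argument there is an addition rather than a deviation.

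On the step you flag as the main obstacle, the paper's device is simpler than you fear. From $G(t;\gamma)\ge G(t;0)+\gamma^2\ge\gamma^2$ on $(0,t^*_0)$, Sturmian comparison produces a growing solution with $\rho(t^*_0)\ge\rho(s)\,\e^{\gamma(t^*_0-s)}$; inserting $\rho^2(s)/\rho^2(t^*_0)\le\e^{-2\gamma(t^*_0-s)}$ into your integral formula for $J(t^*_0)$ and bounding $|2k^2-\Phi|$ by its value at $t=0$ gives $|\theta_t(t^*_0)|\le 2\gamma\,|2a^2(1+\epsilon k)-k^2|\,t^*_0+O(\gamma^2)$. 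Note, however, that your crude monotonicity bound $\int_0^{t^*_0}R^2\,\rd s\le R^2(t^*_0)\,t^*_0$ already delivers the same leading order, so the ``exponential concentration'' refinement is not actually needed: either route yields $\theta_t^2=O\bigl(\gamma^2(t^*_0)^2\bigr)$, not $O(\gamma^2 t^*_0)$. The linear power of $t^*_0$ in the stated constant $C$ is an internal inconsistency in the paper---its own proof, like yours, produces $(t^*_0)^2$ in the numerator (and $k^4$ rather than $k^2$ in the denominator).
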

\begin{rem} It follows from Theorem~\ref{gtransition}, that a small amount of viscosity does not change the qualitative behavior of a solution of the linearization, but extends the interval of instability of an initially unstable Stokes waves, thus
delaying the onset of oscillations. 
\end{rem}

\begin{proof}
Throughout the proof, we assume that $\displaystyle 4a^2-\frac{k^2}{1+\epsilon k}>0$. In order to capture the time at which the complex modulus of $U$ has an inflection point, we introduce polar coordinates $U=\rho \e^{\ri \phi}$ and study the resulting coupled system of real ODEs:
\begin{equation}
\label{polar}
\begin{split}
\rho_{tt} -\rho [f(t)+\theta_t^2+\gamma^2] &= 0, \\
\rho \theta_{tt}+ 2 \rho_t \theta_t + g(t) \rho & =0,
\end{split}
\end{equation}
where $f(t)=k^2(2c(t)-k^2)$ and $g(t)=2\gamma (c(t)-k^2)$, with $c(t)=2 a^2 (1 + \epsilon k) \e^{-2\Gamma t}$.
\smallskip

Set $G(t;\gamma)=f(t)+\theta_t^2+\gamma^2$. First observe that, when $\gamma=0$ system~\eqref{polar} reduces to
\begin{equation}
\label{polar0}
\begin{split}
\rho_{tt} -\rho [f(t)+\theta_t^2] &= 0, \\
\rho \theta_{tt}+ 2 \rho_t \theta_t & =0.
\end{split}
\end{equation}
Choosing the initial conditions $\rho(0)=\rho_0, \, \rho_t(0)=\rho_1,\, \theta(0)=\theta_t(0)=0$, corresponding to a real $U(0)$, system~\eqref{polar0} reduces to 
\[
\rho_{tt}-k^2[4a^2(1+\epsilon k)e^{-2\Gamma t} -k^2] \rho=0,
\]
the same as equation~\eqref{2de0}. 

It follows from Proposition~\ref{0transition} for the non-viscous case that, if  $\displaystyle 4a^2>\frac{k^2}{1+\epsilon k}$ there exists a solution of~\eqref{polar0} which grows exponentially until time $t^*_0$ where $G(t^*_0; 0)=0$, and then becomes oscillatory for $t>t^*_0$; here $\displaystyle t^*_0=\frac{1}{2\Gamma}\ln \frac{4a^2(1+\epsilon k)}{k^2}=\frac{1}{\Gamma}\ln  \frac{2a\sqrt{1+\epsilon k}}{|k|}$.
\smallskip

When $\gamma\not=0$, consider system~\eqref{polar} with real initial conditions, i.e., $\theta(0)=\theta_t(0)=0$. 
Since $G(t;\gamma)=G(t;0)+ (\theta_t)^2+\gamma^2$, we have that $G(t;\gamma)>\gamma^2$ on the interval $(0, t^*_0)$, so Sturmian Theory guarantees the existence of a solution $(\rho,\theta)$ such that $\rho$ grows exponentially at least as fast as $\e^{\gamma t}$. More precisely, 
\begin{equation}
\label{estg}
\rho(s+t)\geq \rho(t) \e^{\gamma s}, \quad \forall \ s,t>0, \ {\mathrm{with}} \  t+s\leq t^*_0.
\end{equation}

From the smoothness of the vector field of~\eqref{2deg} we have that $G(t;\gamma)$ is continuously differentiable in a neighborhood of $(t^*_0, 0)$. Moreover,  $G_t(t^*_0;0)=-2\Gamma k^2\not=0$. It then follows from the Implicit Function Theorem that for sufficiently small $\gamma$ there exists a unique $t^*_\gamma$ in a neighborhood of $t^*_0$ such that $G(t^*_\gamma;\gamma)=0$. Since $G_t(t^*_0;0)<0$, then we must have $t^*_\gamma>t^*_0$: thus the effect of viscosity is to increase the initial interval of instability and delay the onset of oscillations.

We can estimate $t^*_\gamma$ using the linearization 
\begin{equation}
\label{linG}
\begin{split}
G(t;\gamma) & = G(t^*_0;\gamma) + G_t(t^*_0;\gamma) (t - t^*_0) + O((t-t^*_0)^2) \\
& =\theta_t^2(t^*_0)+\gamma^2 +\left[-2 \Gamma k^2 c(t^*_0) - \big( 2\frac{\rho_t}{\rho} \theta_t^2 + g\theta_t\big) \right]_{t=t^*_0} (t-t^*_0) + O((t-t^*_0)^2).
\end{split}
\end{equation}

An upper bound for $\theta_t^2(t^*_0)$ can be found by using the second equation in system~\eqref{polar} with $\theta_t(0)=0$, which gives
\begin{equation}
\label{thetadot}
\theta_t=\frac{\int_0^t g \rho^2 \rd s}{\rho^2}.
\end{equation}
Using~\eqref{estg} with $s+t=t^*_0$, we get
\[
\begin{split}
|\theta_t(t^*_0)|\leq & \max_{(0,t^*_0)}|g(t)|\frac{\int_0^{t^*_0} \rho^2 \rd s}{\rho^2(t^*_0)} \leq |g(0)| \frac{\int_0^{t^*_0} \rho^2(t^*_0) \e^{-2\gamma (t^*_0-t)} \rd t}{\rho^2(t^*_0)}=\frac{|g(0)|}{2\gamma}\left( 1 - \e^{-2\gamma t^*_0}\right) \\
&=2\gamma |2a^2(1+\epsilon k)-k^2| t^*_0 + O(\gamma^2).
\end{split}
\]
Since $|\rho_t|/\rho$ is bounded and $g\theta_t=O(\gamma^2)$, by truncating~\eqref{linG} at first order, we get
\[
t^*_\gamma=t^*_0+\frac{\theta_t^2(t^*_0)+\gamma^2}{\Gamma k^4} +O(\gamma^4).
\]
It follows that the time $t^*_\gamma$ at which $\rho$ transition to a predominantly oscillatory behavior satisfies
\[
t^*_0<t^*_\gamma \leq t^*_0+\gamma^2 \frac{4[2a^2(1+\epsilon k)-k^2]^2 t^*_0+1}{\Gamma k^2} + O(\gamma^4).
\]
\end{proof}
\begin{rem}
For $k=1, a=.55, \epsilon=.05, \Gamma=.01, \gamma=.001$, $t^*_0=11.9705$ and the upper estimate for the transition time is $t^*_\gamma=11.9712$. The numerically computed inflection time is $11.97075$, in agreement with the estimate provided by the linear approximation. 
\end{rem}

\section{Frequency downshifting and rogue wave activity} 
\label{FDS_RWA}

A main feature of the HONLS model, equation~\eqref{vHONLS} with $\Gamma = 0$, $\delta = 0$,
is the nonlinear interaction and subsequent energy transfer between the various modes 
comprising its solutions.
When considering perturbed Stokes wave initial data, solutions of the HONLS equation
 exhibit
temporary frequency downshifting  with the wavetrain undergoing successive modulations and demodulations as the energy is transferred back and forth between the carrier and sideband modes.
 \smallskip
 
 In this section we examine the impact of viscous damping on frequency
 downshifting in water waves.  Specifically, we investigate the mechanism behind permanent downshift, estimate the time of permanent downshift, and explore the relation between frequency downshifting and rogue wave formation.
 The numerical studies are carried out for the following models
 obtained from equation \eqref{vHONLS} and denoted as:

\begin{equation}
\begin{array}{lcl}
LDHONLS && \mbox{linear damped HONLS:  $\epsilon = 0.05$, $\Gamma \ne 0$, $\nu = 0$, $\delta = 0$;}\\ \\
vHONLS && \mbox{viscous HONLS without diffusion:  $\epsilon = 0.05$, $\Gamma \ne 0$, $\nu = 1$, $\delta = 0$;}\\ \\
dvHONLS && \mbox{viscous HONLS with diffusion:  $\epsilon = 0.05$, $\Gamma \ne 0$, $\nu = 1$, $\delta = 42\Gamma$
  (i.e., $\epsilon\delta  = 2.1\Gamma$).}
\end{array}
\label{block}
\end{equation}

\subsection{Measuring Downshifting}
\label{measure}

We begin by discussing 
how frequency downshifting is measured. We then provide a criterion for permanent downshifting and apply it to the various models.
In addition,  since we are interested in understanding
how rogue wave activity is affected by viscosity and downshifting,
we recall a standard criterion for rogue waves.
\smallskip

The wave energy $E$ and the momentum, or energy flux, $P$ are defined as follows
\begin{equation}
    E(t) = \frac{1}{L} \int_0^L |u|^2 dx,\qquad \qquad    P(t) = \frac{i}{2L} \int_0^L (u u^*_x - u_x u^*) dx,
\end{equation}
where $L$ is the spatial period of the solution $u(x,t)$.
Using the Fourier series expansion
\begin{align*}
u(x,t) = \sum_{k = -\infty}^\infty \hat{u}_k(t) e^{ikx},
\end{align*}
we write the Fourier representations
\begin{align}
 E(t)  = \sum_{k = -\infty}^\infty |\hat{u}_k|^2, \qquad  P(t)  = \sum_{k = 1}^\infty k(|\hat{u}_k|^2 -|\hat{u}_{-k}|^2).
\end{align}
 Significantly, the momentum measures the asymmetry of the Fourier modes, which
is a key element in understanding frequency downshifting.
\smallskip

Two different spectral quantities, the  {\em spectral center} and the {\em spectral peak}, are commonly used to measure frequency downshifting. 

\begin{dfn} {\ }

\begin{itemize}
    \item[-]

The spectral center
\begin{equation}
    k_m (t): =  \frac{\sum_{k = -\infty}^\infty k|\hat{u}_k|^2}{\sum_{k = -\infty}^\infty |\hat{u}_k|^2} = \frac{P(t)}{E(t)},
\end{equation} 
also referred to as the  spectral mean, 
is a weighted average of the spectral content of the wave \cite{uchiyama}.

\item[-] The spectral peak $k_{peak}(t)$ is the wave number of the Fourier
  mode of maximal amplitude at time $t$. In other words, $k_{peak}$ corresponds to the  highest energy mode.  
\end{itemize}
\label{kmdef}
\end{dfn}

In terms  of the spectral mean, frequency downshifting is considered to occur when
$k_m$ is monotonically decreasing, while in the spectral peak sense downshifting  occurs  when
the spectral peak changes from its original value $k_{peak} = 0$ to a lower mode \cite{uchiyama, trulsen97}.

The spectral diagnostics related to $k_m$ and $k_{peak}$ 
 measure different characteristics of the wave energy distribution.
When the Fourier components are sufficiently concentrated about 
$k_{peak}$, one has $k_m \approx k_{peak}$. 
However, the same value of $k_m$
can result from different distributions of $|\hat{u}_k|^2$, leading to differing claims of downshifting behavior.

\subsubsection*{A criterion for frequency downshifting.}
 \label{def_DS} 

To obtain a more complete description of the spectral activity
   we monitor both $k_m$ and $k_{peak}$ in the  numerical experiments  and 
adopt the following criterion:

{\em Frequency downshifting} occurs if both of the following conditions are met:
  \begin{enumerate}
  \item $\displaystyle \frac{d k_m}{dt} < 0$, and
  \item $k_{peak}$ moves permanently to a lower mode.
    \end{enumerate}

  If both conditions are met, our focus turns to an aspect of downshifting which is related to $k_{peak}$.
  Since $k_{peak}$ may shift to a sideband mode and then
  back to the carrier mode
 repeatedly while $k_m$ is monotonically decreasing, we are interested in
 identifying the last  time at which the carrier mode is the dominant mode.

\begin{dfn}
  \label{tds}
  Provided the two conditions for frequency downshifting are satisfied, 
the {\em time of permanent downshift} $t_{DS}$ is said to be 
the last time at which the carrier wave is the dominant mode, i.e.
$t_{DS}$  is the last time $k_{peak} = 0$ .
\end{dfn}

\begin{rem}
For the range of $\Gamma$-values and initial data considered
in the numerical experiments, the conditions on  $k_m$ and $k_{peak}$
for frequency downshifting are typically both met.
However, there
are   ``outlier'' experiments where only one condition is met and which
can be classified as one  of the
following scenarios:
1) In LDHONLS experiments one can have $\displaystyle \frac{d k_m}{dt} = 0$ while $k_{peak}$ either upshifts or downshifts, as in the example shown in  Figure~\ref{LD_out}.
2) In vHONLS experiments, when the damping parameter $\Gamma$ is very small, we can have $\displaystyle \frac{d k_m}{dt} < 0$, while $k_{peak}$ does not permanently move to a lower mode, even over very long simulation time frames. Based on the criterion adopted here, the "outlier" experiments are not considered
to  exhibit frequency downshifting.
\end{rem}

For the vHONLS and dvHONLS equations one finds 
\begin{align}
  \frac{dE}{dt} = -2\Gamma E(t) - 4\nu\epsilon\Gamma P(t) - 2\epsilon^2\delta Q(t),
\label{derivE}
\end{align}
\begin{align}
\frac{d P}{dt} = -2\Gamma P(t) - 4\nu\epsilon\Gamma Q(t) -2\epsilon^2\delta R(t),
\label{derivP}
\end{align}
and
\begin{align}
  \frac{dk_m}{dt} = \frac{-4\nu\epsilon\Gamma}{E^2}\left(EQ - P^2\right) + \frac{2\epsilon^2\delta}{E^2}\left(PQ - ER\right),
\label{derivkm}
\end{align}
where
\[
Q(t) = \frac{1}{L}\int_0^L |u_x|^2 dx, \qquad \qquad R(t) = \frac{\ri}{2L}\int_0^L (u_x u_{xx}^* - u_x^* u_{xx})\,dx.
\]

Given that $E$ and $P$ are conserved quantities for both the  NLS  ($\epsilon = \Gamma = 0$)
and the 
HONLS  ($\Gamma = \delta = 0$) equations, the spectral center $k_m$ is also
independent of time for these models.
Moreover,  $k_m$ is constant
for the LDHONLS ($\nu = 0$, $\delta = 0$) equation, since
$E$ and $P$  decrease  at the same exponential rate.
\smallskip

We note that when $\delta = 0$, equations~\eqref{derivE} - \eqref{derivP}  do not reduce
to the equations obtained  in \cite{cartergovan} for the evolution of
$E$ and $P$. The difference arises due to the following  factors: i) 
A different coordinate transformation is employed in \cite{cartergovan} than the one used to derive equation (1.1), (see the Appendix).
ii) The vHONLS describes the evolution of the leading term
in the expansion for the velocity potential and is  a version of the dissipative Gramstad-Trulsen equation.   When $\Gamma = \delta = 0$ equation~(\ref{vHONLS}) is Hamiltonian.
On the other hand, \cite{cartergovan} focuses on  the system representing
the evolution of the leading term  in the expansion for surface displacement 
which, without the dissipative and diffusive terms,  is likely not Hamiltonian.


When viscosity  is  present ($\nu = 1$), with or without diffusion, $E$ and $P$  decrease at different rates allowing for the possibility of downshifting. From equation \eqref{derivkm}, downshifting in the {\em spectral mean sense}  will occur under the following conditions:
\begin{equation}
    \begin{array}{rl}
      (i) & \mbox{In the vHONLS case ($\delta =0$),  since $EQ - P^2$ is positive by the  Cauchy Schwartz inequality,}\\ \\
          & \mbox{then  $\displaystyle\frac{dk_m}{dt} < 0$, $\forall t$ and $\forall \Gamma$.}\\ \\
(ii) & \mbox{In the dvHONLS case (with $\delta =\alpha\Gamma$, for some $\alpha >0$):}\\ \\
& -\mbox{If $PQ + ER < 0 $ then   $\displaystyle\frac{dk_m}{dt}$ < 0, $\forall t$ and $\forall \Gamma$.}\\ \\
& -\mbox{If $PQ + ER > 0$ then $\displaystyle\frac{dk_m}{dt} < 0$ provided
        the $\mathcal{O} (\epsilon^2\Gamma)$ diffusive correction is smaller than}\\ \\
      & \mbox{\hspace{8pt} the $\mathcal{O} (\epsilon\Gamma)$ term.}
\end{array}
    \label{key}
\end{equation}

The contribution of the individual modes to the evolution of $P$ and $E$ can be seen by turning to their Fourier representations:
\begin{align}
\label{FdEdt}
 \frac{dE}{dt} = -2\Gamma\left[ |\hat{u}_0|^2 + \sum_{k = 1}^\infty (|\hat{u}_k|^2(1 + 2\nu\epsilon k)  + |\hat{u}_{-k}|^2(1 - 2\nu\epsilon k)) \right] -2\epsilon^2\delta \sum_{k=-\infty}^\infty k^2|\hat{u}_k|^2,
\end{align}

\begin{align}
\label{FdPdt}
\frac{dP}{dt} = -2\Gamma \sum_{k = 1}^\infty k\left[|\hat{u}_k|^2(1 + 2\nu\epsilon k) -|\hat{u}_{-k}|^2(1 - 2\nu\epsilon k)\right] + 2\epsilon^2\delta \sum_{k=1}^\infty k^3\left(|\hat{u}_k|^2-|\hat{u}_{-k}|^2\right).
\end{align}
Significantly,  for the vHONLS model,  equation~(\ref{key}) shows that  the first condition for frequency downshifting,  $\displaystyle \frac{d k_m}{dt} < 0$, is
satisfied  $\forall  \Gamma$.
In Section~\ref{visc_results}, we study the structure  of the linear momentum $P$ in order to determine the underlying mechanism responsible for  permanent downshift in the vHONLS model.
Using
equation (\ref{FdPdt}) we provide an estimate for
the location of the global minimum of $P$, which we
find plays a crucial role in determining the time $t_{DS}$ of permanent downshift.

\subsubsection*{A criterion for rogue waves.}

In the numerical experiments we  adopt the following widely used criterion for the occurrence of rogue waves:

Let 
\begin{equation}
    S(t) = \frac{U_{max}(t)}{H_s(t)}
\end{equation}
be the wave strength, 
where $U_{max}(t) = max_{x\in [0,L]} |u(x,t)|$ and $H_s(t)$ is the significant
wave height, defined as four times the standard deviation of the surface elevation. One says that a rogue wave occurs at time $t^*$, if $S(t^*) \ge 2.2$. 

The vHONLS ~\eqref{vHONLS} is derived form the equations governing deep water surface waves where $\eta$ is the 
  free surface displacement (see the Appendix). At leading order the relationship between $\eta$ and $u$ is given by
$\eta = \frac{i\eps u}{2\sqrt{2}k} + (*)$.
As a result, rogue waves observed in the velocity potential correspond to rogue waves in the surface elevation.

\subsection{Frequency downshifting and rogue wave results}
\label{DSresults}
Equation (\ref{vHONLS}) is solved numerically using an exponential time-differencing fourth-order Runge-Kutta scheme, which uses Pad\'e approximations of the matrix exponential terms \cite{kmwy09,lkx15}.  The number of Fourier modes and the time step used in depend on the specific solution under study.
The HONLS equation ($\epsilon =0.05$, $\Gamma =0$, $\delta =0$) is used as a benchmark to test the accuracy of the integrator. As an example, $N = 256$ Fourier modes and a time step $\Delta t = 10^{-3}$ are used for initial data
(\ref{IC-A}). 
With this resolution, Figures~\ref{M_P} (a) -- (c) show that the invariants $E$, $P$ and
${\cal H}$ 
are  conserved with an accuracy  of  $\mathcal{ O} (10^{-13})$,
$\mathcal{ O} (10^{-12})$, and $\mathcal{ O} (10^{-12})$, respectively.
Here the  Hamiltonian for the HONLS equation is given by
\begin{equation}
{\cal H}[u] =  \int_0^L \left\{ | u_x |^2 - | u |^4 - \ri\frac{\epsilon}{4}
\left( u_x u_{xx}^* - u_x^* u_{xx} \right) 
+ 2\ri \epsilon | u |^2\left(u^* u_x - u u_x^*\right) - \epsilon |u|^2
\Big[ H \big( | u |^2 \big) \Big]_x\right\}\,\rd x.
\end{equation}

\begin{figure}[!ht]
  \centerline{
    \includegraphics[width=.3\textwidth]{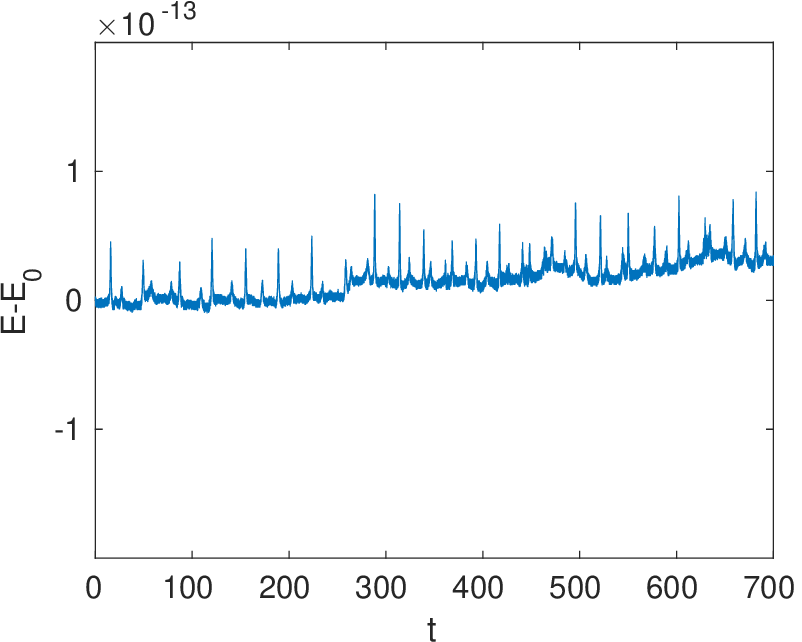}\hspace{12pt}
    \includegraphics[width=.3\textwidth]{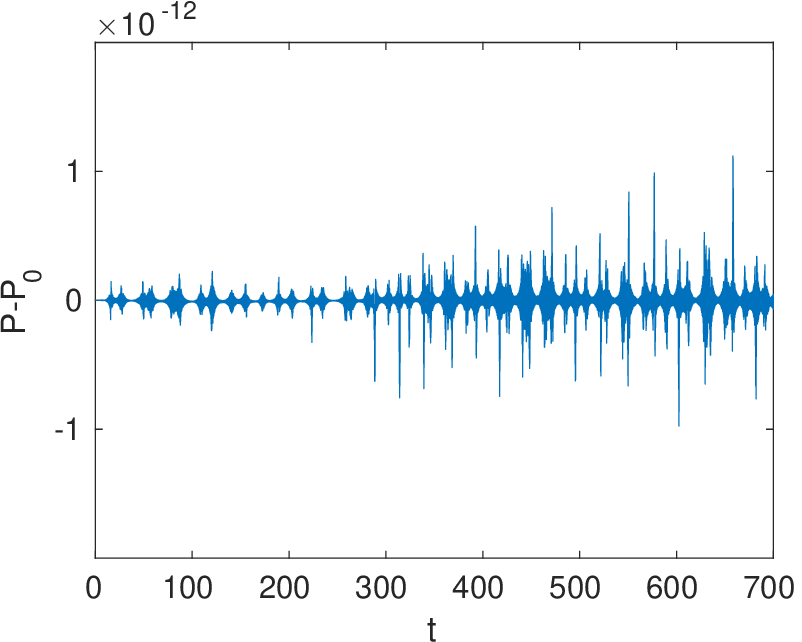}\hspace{12pt}
    \includegraphics[width=.3\textwidth]{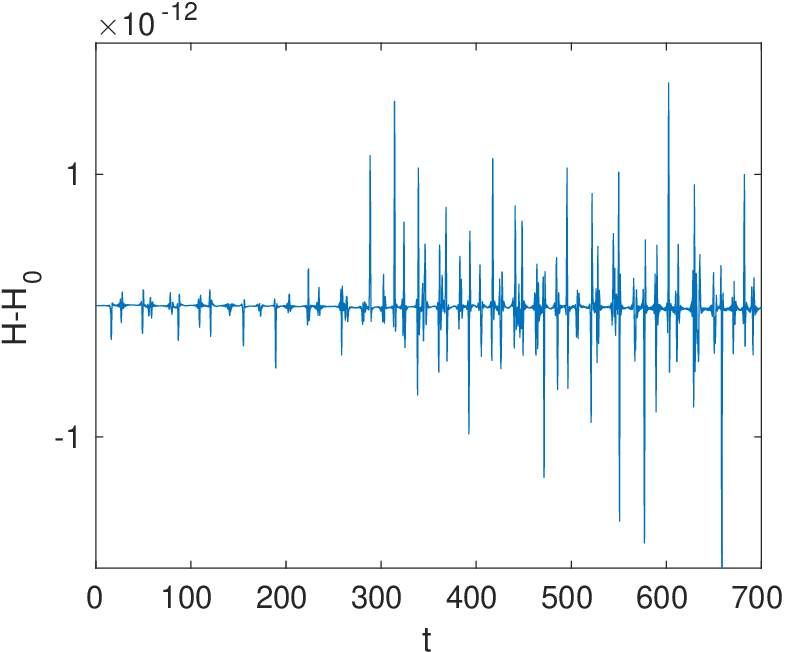}
  }
  \caption{For the HONLS equation with $\epsilon = 0.05$ and initial data (\ref{IC-A}), the figures show preservation of invariants: (a) $E(t)-E(0)$, (b) $P(t)-P(0)$, and (c)
    ${\cal H}(t)-{\cal H}$(0).}
  \label{M_P}
  \end{figure}

\subsubsection*{Choice of initial data and values of $\Gamma$.} Laboratory and numerical experiments have shown that modulated periodic wavetrains evolve chaotically in the presence of two or more unstable modes
\cite{ahhs01}. A chaotic background  can significantly enhance  the likelihood of
rogue waves \cite{cs02}. 
The following family of perturbations of an 
unstable Stokes wave:
\[
  u(x,0) = a(1+\alpha \cos{\mu x}),
\qquad 10^{-2} \le \alpha \le 10^{-1}, \, \mu = \frac{2\pi}{L},
\]
is an ideal choice of initial data for investigating both frequency downshifting and rogue wave formation.
The values of the amplitude $a$ and spatial period $L$ are chosen so that 
the Stokes wave has two unstable modes at $t=0$, i.e.
$0.42 < a < 0.5$ and $L = 4\sqrt{2} \pi$.

For each model, an ensemble of experiments is performed by  varying the damping parameter $\Gamma$ within the range of small-to-moderate values $2.5 \times 10^{-4} \le \Gamma \le 10^{-2}$, with increment $\Delta\Gamma = 2.5\times 10^{-4}$.
We restrict our  choice of $\Gamma$ to this range for the following reasons: 1) One of our goals is to study the effect of viscous damping on rogue wave formation. We do not consider $\Gamma > 10^{-2}$ as  rogue waves are not observed in the evolutions of vHONLS and dvHONLS for 
$0.008 \le \Gamma \le 0.01$, see Figures~ (\ref{NRWs})-(\ref{time_tlrw}). 2) For this range of $\Gamma$ the viscosity-related instabilities do not have a significant effect on the solution.

\begin{rem}
The viscous instabilities have a small growth rate $\gamma = 2\epsilon\Gamma k$.  The important features of the vHONLS and dvHONLS evolutions, such as the time of permanent downshift, occur
  much earlier than the time $\gamma^{-1}$ at which the  viscous instabilities have a noticeable effect. As a consequence, the results for the vHONLS and dvHONLS numerical evolutions are  close.
    For higher values of $\Gamma$, there can be significant differences in the behavior of vHONLS and dvHONLS. Further downshifting may not be observed.
 For example, for the vHONLS with initial data (\ref{IC-A}), $\epsilon = 0.05$, $\Gamma = 0.1 $, one finds  $k_{peak}= 0$  until the numerical solution deteriorates with the high-frequency modes now dominant. Introducing  diffusion in the experiments with $\epsilon \delta = 2.1 \Gamma = 0.21 $ controls the growth of the viscous instabilities, but
 downshifting is not observed.
\end{rem}
 
In the subsequent discussion the  initial condition  
\begin{equation}
  u(x,0) = 0.45(1+ 0.01 \cos{\mu x}), \qquad \mu=\frac{2 \pi}{L}, \, L = 4\sqrt{2} \pi,
  \label{IC-A}
\end{equation} 
will often be used to
illustrate the salient features of the various models.

\subsubsection{Linear damped HONLS  model: \texorpdfstring{$\epsilon = 0.05, \Gamma >0, \nu = 0, \delta = 0$}{LDR}}
\label{results_LD}
In the linear damped (LDHONLS)
framework the individual Fourier modes experience uniform damping,
as described by equations \eqref{FdPdt} - \eqref{FdEdt}, and $k_m$ remains constant in time. Consequently, frequency downshifting is not expected.

As an example,  consider the solution of the LDHONLS equation with $\Gamma = 0.01$ and  initial condition $u_0 = 0.5(1 + 0.1 \cos\mu x)$, $L = 4\sqrt{2}\pi$,
in the two-unstable mode regime.
Figure~\ref{LD_2UMR_good}
shows the evolution of (a) the strength $S(t)$,  (b) the  spectral center $k_m$,
(c) the spectral peak $k_{peak}$, and (d) the amplitude of the Fourier modes $|\hat u_{k}|$,
$k = 0, \pm 1, \pm 2$.
\begin{figure}[!ht]
  \centerline{
    \includegraphics[width=.3\textwidth]{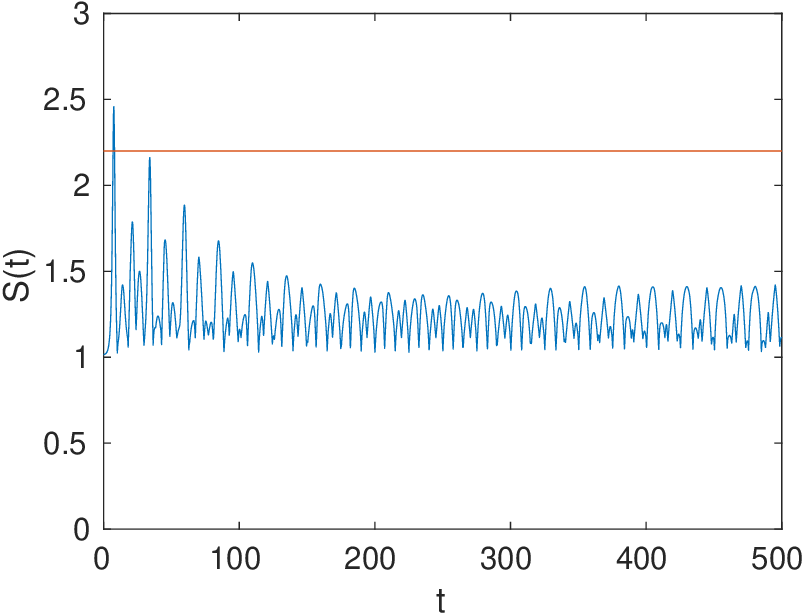}\hspace{12pt}
    \includegraphics[width=.3\textwidth]{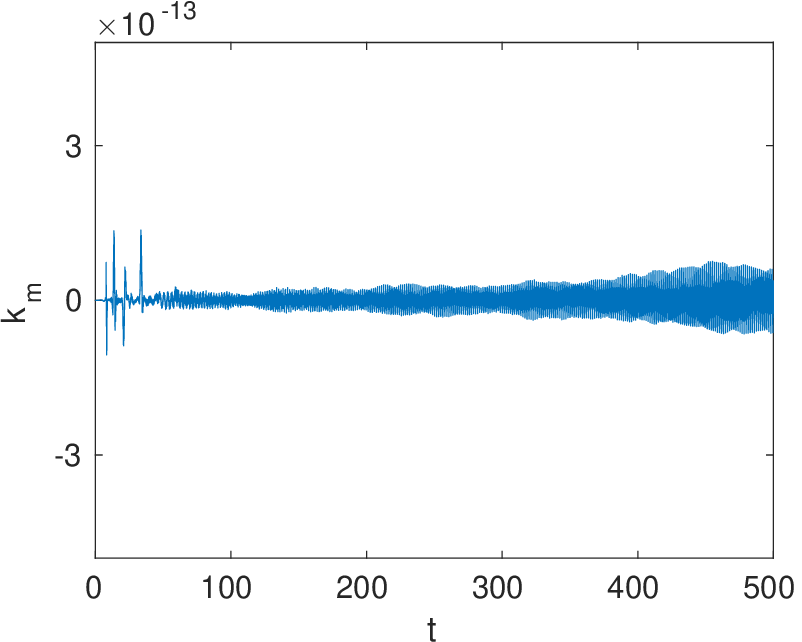}
    }
  \centerline{
    \includegraphics[width=.3\textwidth]{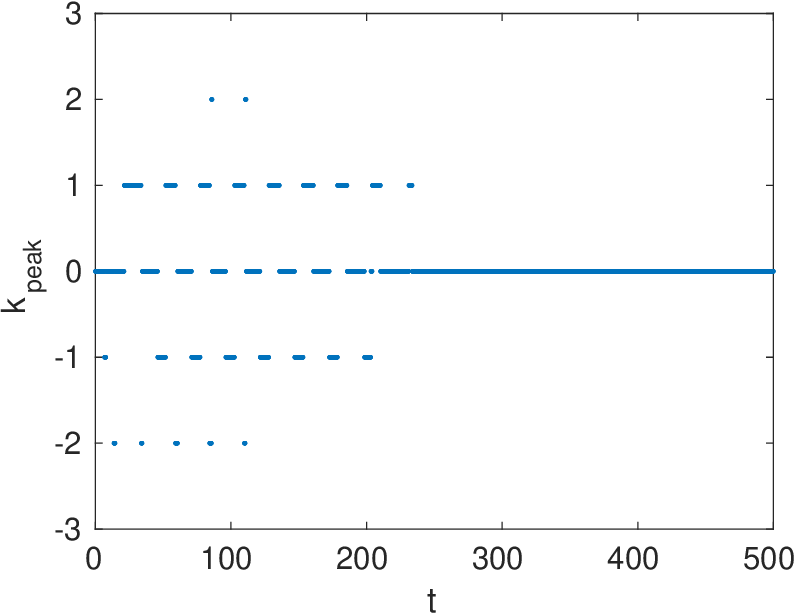}\hspace{12pt}
    \includegraphics[width=.3\textwidth]{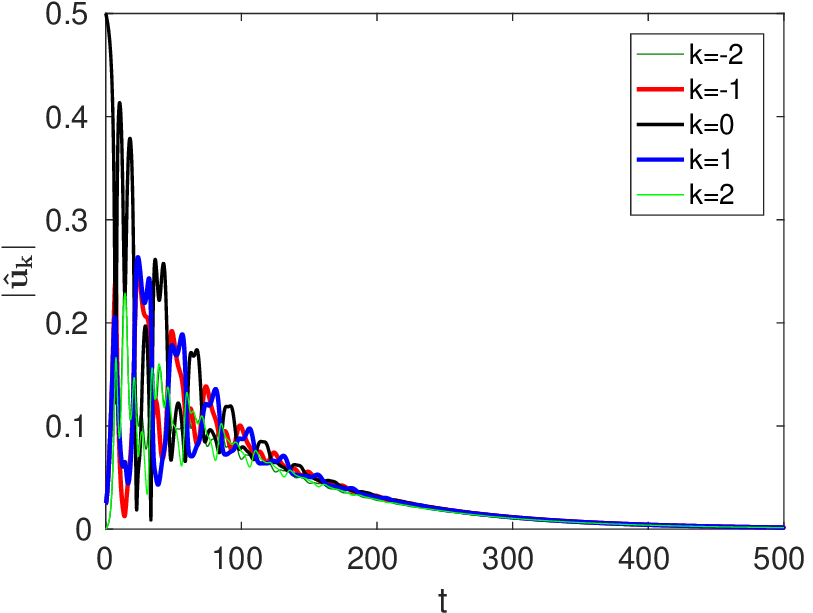}
  }
  \caption{LDHONLS equation with $\Gamma = 0.01$ and initial data $u_0 = 0.5(1 + 0.1 \cos\mu x)$, $L = 4\sqrt{2}\pi$. Numerical evolutions of 
  (a) the strength $S$,  (b) $k_m$,
(c)  $k_{peak}$ and (d) $|\hat u_{k}|$.}
 \label{LD_2UMR_good}
  \end{figure}
The strength plot shows a single rogue wave developing early in the experiment.
Within integrator accuracy of $\mathcal{O} (10^{-13})$, the spectral center is  constant at $k_m = 0$.
Temporary shifting in $k_{peak}$ occurs until $t \approx 230$ when  $k_{peak}$ settles at $k = 0$. The Fourier mode plot shows that  $|\hat u_0|$ (black curve) is dominant
with extremely
small differences in $|\hat{u}_k|^2 -|\hat{u}_{-k}|^2$. The two spectral diagnostics $k_m$ and $k_{peak}$ consistently indicate that  frequency downshifting does not occur.

Even though $\displaystyle\frac{d k_m}{dt} = 0$ for the LDHONLS, in some cases the spectral peak $k_{peak}$ may shift downwards.
 Figure~\ref{LD_out} shows the evolution of  (a) the strength $S$,  (b) the spectral center $k_m$,
(c) the spectral peak $k_{peak}$, and (d) $|\hat u_{k}|$ for
$\Gamma = 0.01$ and initial data \rf{IC-A}.
\begin{figure}[!ht]
  \centerline{
    \includegraphics[width=.3\textwidth]{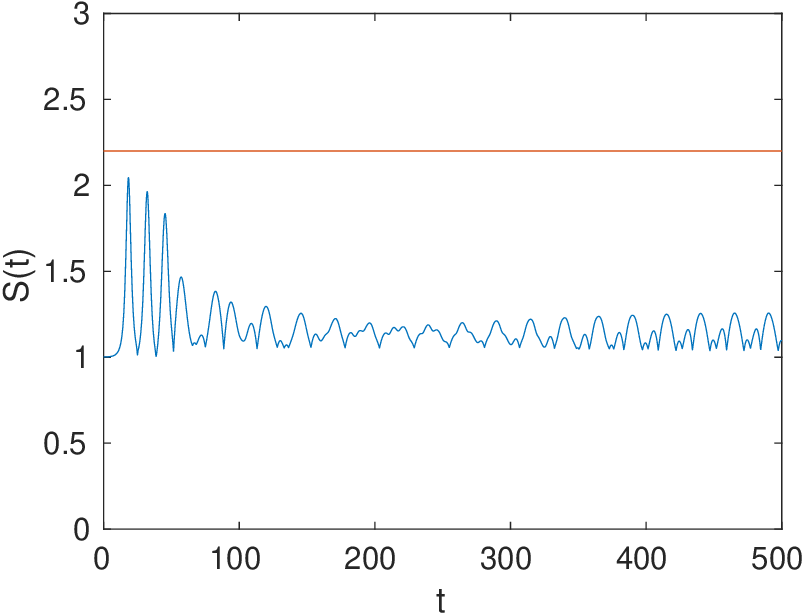}\hspace{12pt}
    \includegraphics[width=.3\textwidth]{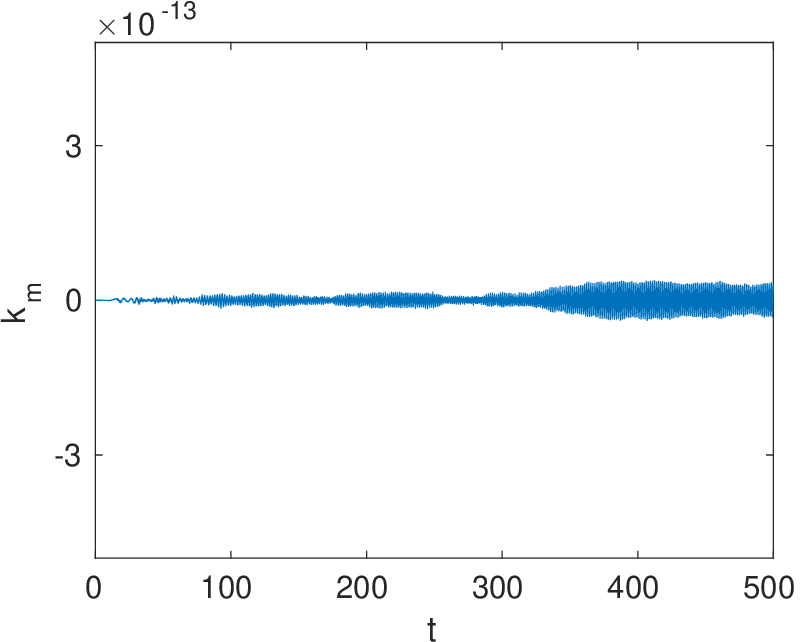}
  }
    \centerline{
    \includegraphics[width=.3\textwidth]{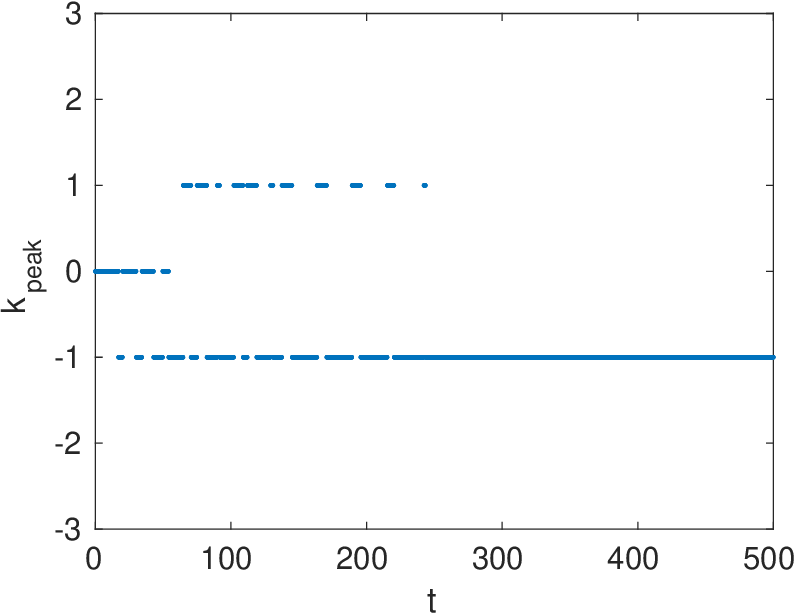}\hspace{12pt}
    \includegraphics[width=.3\textwidth]{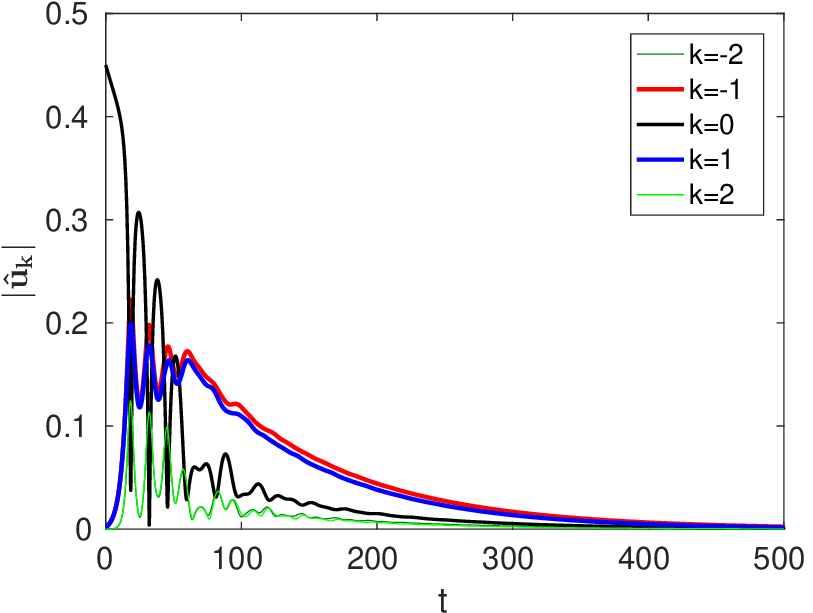}
  }    
  \caption{LDHONLS equation with  $\Gamma = 0.01$ and initial data (\ref{IC-A}).  Numerical evolution of 
  (a) the strength S,  (b) $k_m$,
    (c)  $k_{peak}$ and (d) $|\hat u_{k}|$.}
  \label{LD_out}
  \end{figure}
 The Fourier mode plot shows that, although $|\hat u_{-1}|$ (red curve) dominates,
$|\hat u_{\pm 1}|$ and $|\hat u_{\pm 2}|$ are nearly identical thus preserving
$k_m = 0$. Downshifting  does not occur in terms of the spectral mean.
However, downshifting is observed in the spectral peak sense as
 $k_{peak}$ shifts down to $k = -1$
at $t \approx 220$. 
In this situation, since the  working criterion is not satisfied, it is viewed that downshifting does not occur.

\subsubsection{Viscous HONLS model: \texorpdfstring{$\epsilon = 0.05, \Gamma >0, \nu = 1, \delta = 0$ and
$\delta = 42\Gamma$ (i.e. $\epsilon\delta = 2.1\Gamma$)}{VDR}}
\label{visc_results}
In this subsection we 
discuss our findings on the mechanism for downshifting and the
time of permanent downsift $t_{DS}$ 
for the viscous models  for two specific values of $\Gamma$.
In the next subsection,   the ensemble results are presented in
Figures~\ref{time_ds} (a) - (b)  which confirm the  downshifting mechanism and its relation to
$t_{DS}$ across all given values of $\Gamma$.

Consider the solutions of
the vHONLS  and  dvHONLS 
equations,  with $\Gamma = 0.00275$ and  $\Gamma = 0.005$, for 
initial data \rf{IC-A}. 
Figure~\ref{vHONLS_both_km_00275}
shows the evolution of  (a) the energy $E$,  (b) the momentum $P$, and
(c) the spectral center $k_m$ for   $ 0 \leq t \leq 500$. The results for the vHONLS ($\delta = 0$) and dvHONLS ($\delta = 42\Gamma$)  equations are presented for the two values of $\Gamma$.
Following Proposition (\ref{prop2p2}),  the diffusion coefficient $\delta = 42\Gamma$ (i.e. $\epsilon\delta = 2.1\Gamma$) is chosen to ensure that the additional
instabilities arising from the viscous term are stabilized in the dvHONLS equation.
The graphs for $\Gamma = 0.00275$ are rendered with solid lines, while the
results for $\Gamma = 0.005$ are illustrated using dashed lines.

\begin{figure}[!ht]
  \centerline{
    \includegraphics[width=.3\textwidth]{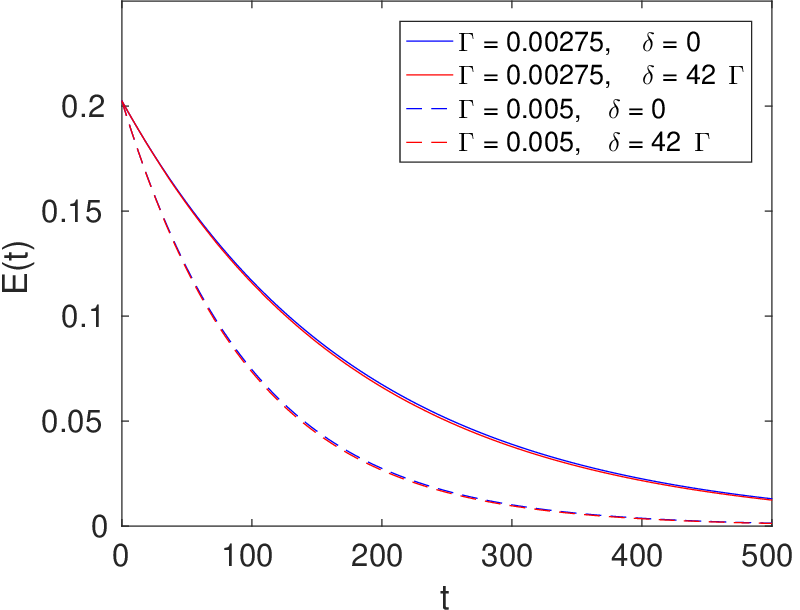}\hspace{12pt}
    \includegraphics[width=.32\textwidth]{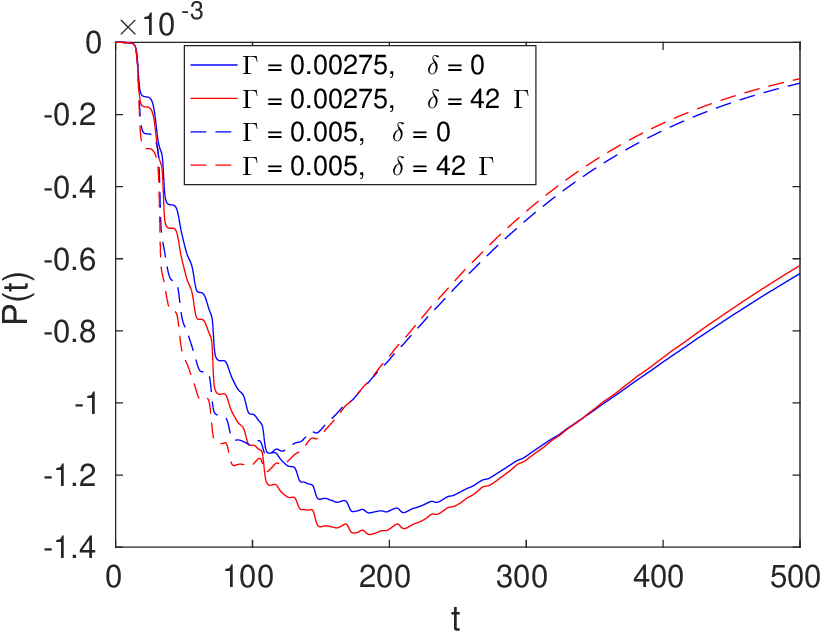}\hspace{12pt}
    \includegraphics[width=.32\textwidth]{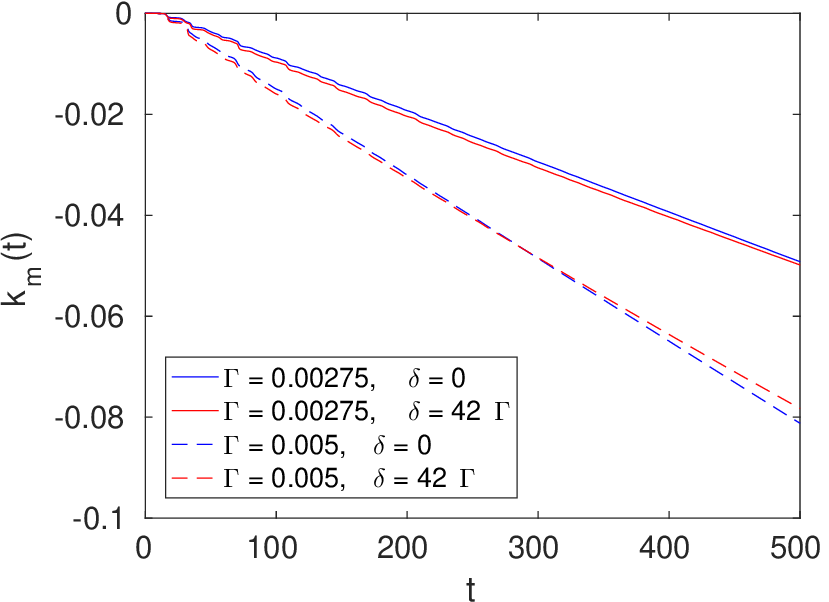}
  }
  \caption{vHONLS and dvHONLS equations with $\Gamma = 0.00275$ and $\Gamma = 0.005$,
    for initial data (\ref{IC-A}). Evolutions of  a) the energy E, b) the momentum P, and c) the spectral center $k_m$.}
  \label{vHONLS_both_km_00275}
\end{figure}

The energy exhibits an  exponential-type decay and the spectral center $k_m$ is strictly decreasing for
both the vHONLS and dvHONLS models. This is 
consistent with the behavior described by equations \eqref{derivE} and
(\ref{derivkm}). The evolution of $P$ in the vHONLS model is 
in striking contrast with the exponential decay of the
momentum in the LDHONLS model (see equation~\eqref{derivP} with $\nu=\delta=0$.)
In Figure \ref{vHONLS_both_km_00275}(b), we observe (at a  macroscopic level)
that the graph of
$P$ transitions from 
concave down to concave up and then back to  concave down.
$P$ achieves its global minimum at a time $t_{P_{min}}$, which is 
followed by a decrease in the magnitude  $|P|$. For example, 
$t_{P_{min}} = 185$  for the vHONLS equation with $\Gamma = 0.00275$.
\smallskip

Although equation (\ref{derivkm}) predicts downshifting  in the spectral mean sense, it  is not immediately apparent from equation (\ref{derivP}) whether and when there will be  an accompanying permanent downward shift in $k_{peak}$. 
In Figures~\ref{KPE_woD_withD_00275} (a) -- (b) the time of permanent downshift as determined by $k_{peak}$ is $t_{DS} = 265$ for the vHONLS equation and a slightly larger value for the dvHONLS equation.
Figure~\ref{KPE_woD_withD_00275} (c) shows the Fourier
mode amplitudes $|\hat u_{k}|$ for $k = 0, \pm 1, \pm 2$ for the vHONLS.   The Fourier mode
$\hat u_{-1}$, whose amplitude is given by the red curve, is
clearly the dominant mode. We note that the difference
$|\hat u_{k}|$  - $|\hat u_{-k}|$ is $\mathcal{O} (10^{-2})$ for $k = 0, \pm 1, \pm 2$.

Spectral activity,  with shifts in $k_{peak}$ between the sideband modes, occurs for $t > t_{DS}$ in both the vHONLS and dvHONLS (see Figure~\ref{KPE_woD_withD_00275} (a) - (b)). To characterize the regime when the shifting between the sidebands has ceased, we introduce the following:
\begin{dfn} Let $t_{peak}$ be the time s.t. $k_{peak}(t)=c$ for all $t\geq t_{peak}$, where $c$ is a constant.
\end{dfn}
In the above example, $t_{peak}$ is greater for the vHONLS in comparison with the dvHONLS. 

The results provided in Figures~\ref{vHONLS_both_km_00275} -- \ref{KPE_woD_withD_00275}
indicate  that the decay in $E$, coupled with the increase in the magnitude of
the momentum $|P|$ during the early stages of the wave evolution, results in a predominantly downward shift in $k_{peak}$.

\begin{figure}[!ht]
  \centerline{
    \includegraphics[width=.3\textwidth]{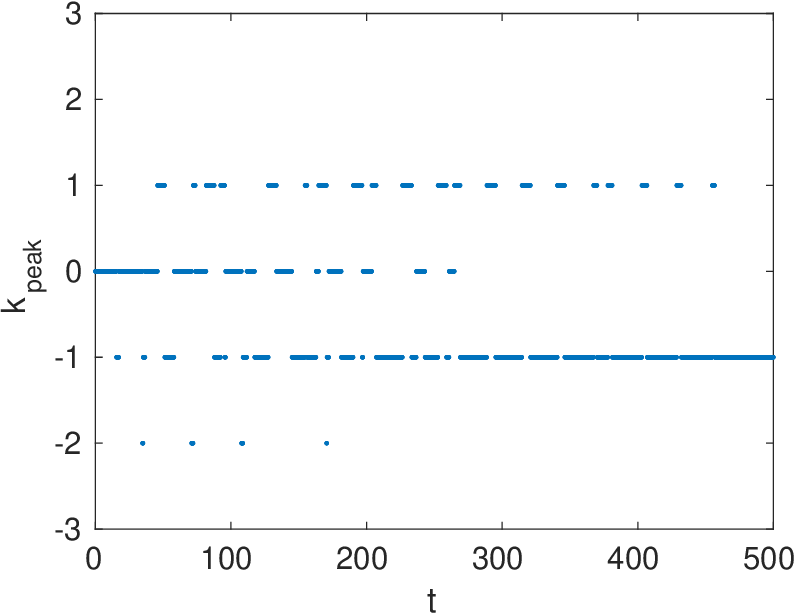}\hspace{12pt}
    \includegraphics[width=.3\textwidth]{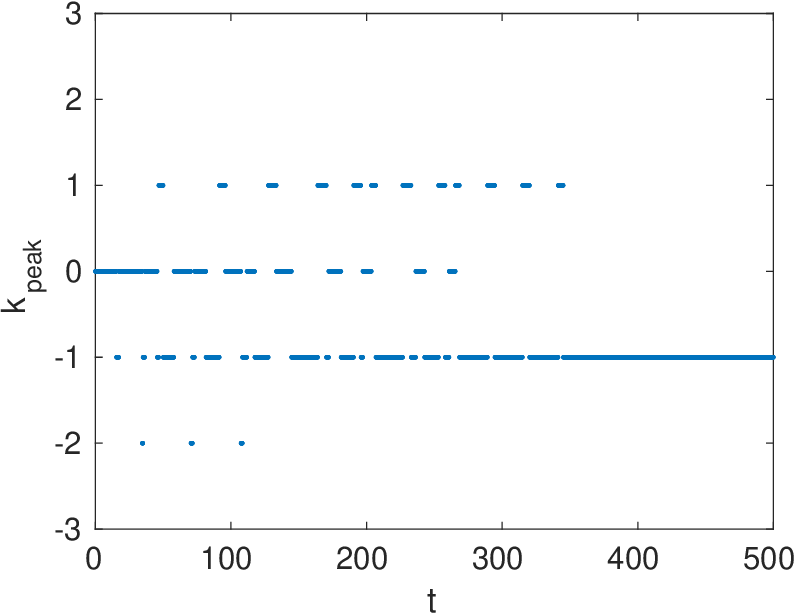}\hspace{12pt}
    \includegraphics[width=.3\textwidth]{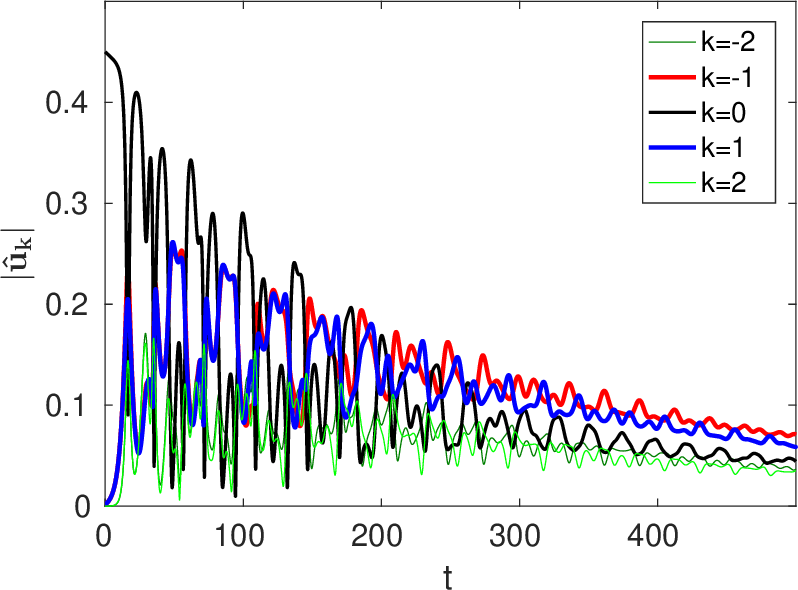}
     }
  \caption{Evolution of the spectral peak $k_{peak}$ for the (a)   vHONLS and (b) dvHONLS equations with  $\Gamma = 0.00275$ and $u_0 = 0.45(1 + 0.01 \cos\mu x)$. (c) Evolution of the Fourier modes $|\hat u_{k}|$ for the vHONLS.}
  \label{KPE_woD_withD_00275}
\end{figure}

Several other features of $P$ and its derivatives
are relevant to  downshifting in the vHONLS.
Figures~\ref{PanddPdt} (a) -- (c) show the plots of $P$,
$P'$,
and $P''$ for $\Gamma = 0.00275$ and 
initial data \ref{IC-A}. The
blue and red  curve segments mark the time intervals when $k_{peak}=0$ and $k_{peak}\neq 0$, respectively.
We find that the time $t_{P_{min}}$ of the global minimum of $P$
is an important {\em precursor} to the time $t_{DS}$ of 
permanent downshift, i.e., $t_{DS} > t_{P_{min}}$. 
Large rapid fluctuations in $P'$
and $P''$ are observed initially, which 
decay substantially 
in magnitude as time evolves, with {\em most} of the decay occuring for $ 0 < t <  t_{P_{min}}$.
Permanent downshift  occurs
when the magnitude $|P|$ is decreasing 
and the derivatives
$P'$ and $P''$ become small enough that changes in $P'$ are no longer sufficient for $k_{peak}$ to upshift back.

\smallskip

\begin{observe}
  For the particular example under consideration ($\Gamma = 0.00275$),  the vHONLS and dvHONLS evolutions exhibit these important features: 1)
  $t_{DS} > t_{P_{min}}$; \,
  2) there are rapid, step-like changes in the momentum during the transition from $k_{\text{peak}} = 0$ to $k_{\text{peak}} \neq 0$ (and vice versa), which
 occur mostly before $t_{P_{min}}$and which  diminish in magnitude as time evolves; and 3) permanent downshift occurs when
  $|P|$ is strictly decreasing and once  $P'$ and $P''$ become  sufficiently small. 
\end{observe}
\begin{figure}[!ht]
  \centerline{
    \includegraphics[width=.3\textwidth]{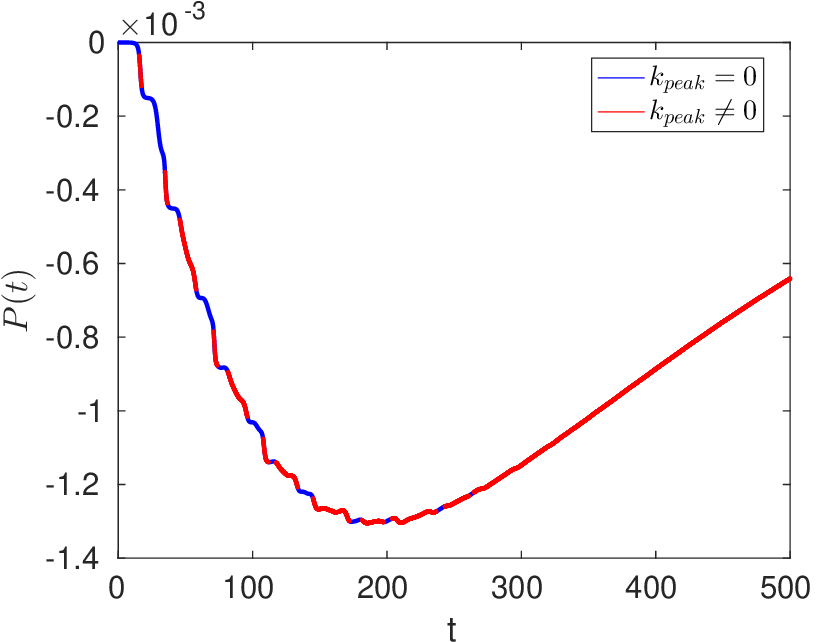}\hspace{12pt}
   \includegraphics[width=.3\textwidth]{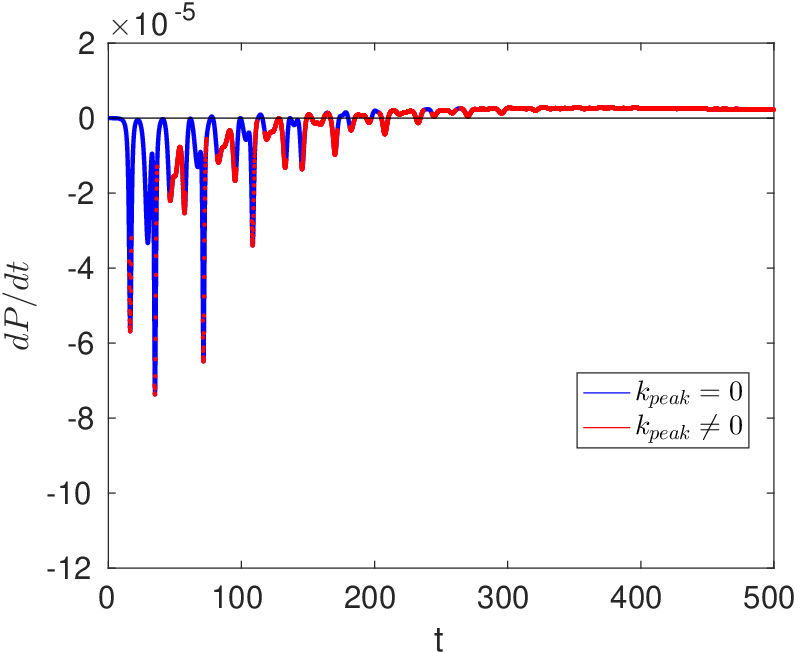}\hspace{12pt}
   \includegraphics[width=.3\textwidth]{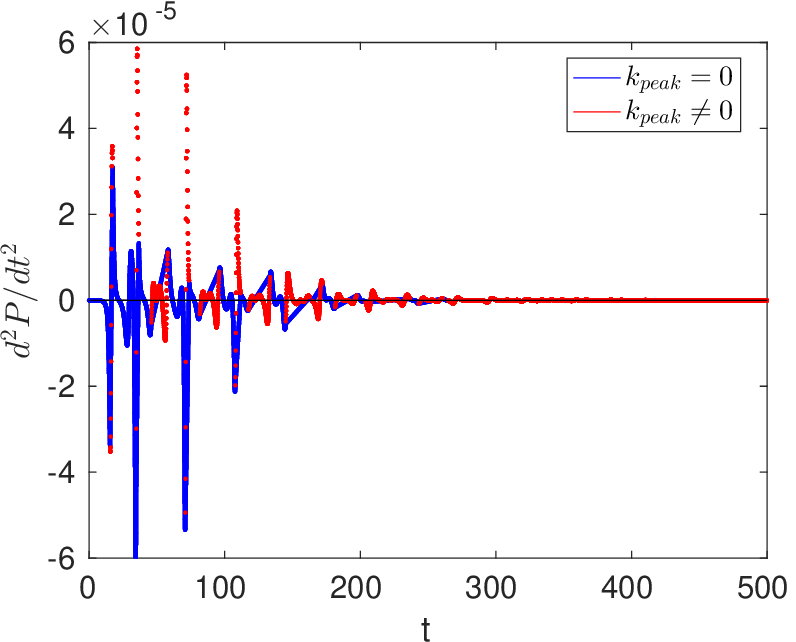}
  }
  \caption{Evolutions of (a) the momentum $P$ and its derivatives (b)  $P'$
    and (c)  $P''$ for $\Gamma = 0.00275$ and initial data \ref{IC-A}. Blue and red portions of the graphs mark the time intervals when $k_{peak}=0$ and $k_{peak}\neq 0$, respectively. }

  \label{PanddPdt}
\end{figure}

In order to provide an analytical explanation of the observed behavior of $P$, we examine the Fourier representation of $P'$
given by equation (\ref{FdPdt}).  The  initial condition used in the numerical experiments is symmetric with 
 $P(0) = 0$. The presence of the  negative $\mathcal{O} (\epsilon\Gamma)$-term in equation \eqref{derivP} leads to $P'(t)<0$ for some  time interval $0 < t < t_1$.
As discussed in Corollary \ref{negk}, the introduction of viscosity in the vHONLS model induces long-term instabilities primarily in the negative Fourier modes.
In the asymptotic regime,   the negative modes $|\hat{u}_{-k}|$ dominate the evolution of the solution of equation (\ref{FdPdt}), leading to  $P'(t) > 0$ for large $t$.

A rough estimate of the  time where the global minimum of $P$ occurs for the vHONLS equation can be obtained from expression \eqref{FdPdt} with $\delta = 0$ by only retaining  the modes in the series up to $k = \pm 1$. The resulting truncation gives
\begin{equation}
   \frac{dP}{dt} \approx -2\Gamma \left(|\hat u_1|^2(1+2\epsilon) - |\hat u_{-1}|^2(1-2\epsilon)\right).
  \label{Pdottrunc}
  \end{equation}
Approximating $|\hat{u}_{\pm 1}|$ with its asymptotic expression for large times (see  Section~\ref{largetime}), we substitute  $|\hat{u}_{\pm 1}|^2 \sim e^{\mp 2\gamma t}$ 
into
equation (\ref{Pdottrunc}) and solve  for $P'(t) = 0$ to obtain
\[e^{4\gamma t} \approx  \frac{1+2\epsilon}{1-2\epsilon},
\]
which provides the following estimate for the time of global minimum of $P$:
\begin{equation}
  t_{P_ {min}} \approx \frac{1}{4\gamma}\ln\frac{1+2\epsilon}{1-2\epsilon}
  \approx  \frac{1}{2\Gamma}\left(1 + \frac{4}{3}\epsilon^2\right),
  \label{tpmin}
\end{equation}
where we used $\gamma = 2\epsilon \Gamma$.
\begin{rem}
When $\Gamma = 0.00275$ we compute  $t_{P_{min}} \approx  181.6$, in good  agreement with the value  for $t_{P_{min}} = 185$ obtained in the 
numerical experiment.
\end{rem}

We observe that varying $\Gamma$ leads to qualitatively similar  graphs for E, P, and $k_m$.
Comparing  the graphs of $P$
in Figure \ref{vHONLS_both_km_00275} (b), we find that the global minimum of  $P$ occurs earlier for the larger value $\Gamma = 0.005$. In the following ensemble results for the viscous models we find $t_{P_{min}}$ is a decreasing function of $\Gamma$,
in agreement  with the analytical approximation given by~\eqref{tpmin}.


\subsubsection{Ensemble results for the viscous HONLS model: \texorpdfstring{$\epsilon = 0.05$, $\Gamma >0$, $\nu = 1$, $\delta =0$ and $\delta = 42\Gamma$ (i.e., $\epsilon\delta = 2.1\Gamma$)}{ER}}
\label{ensemble}
An ensemble of experiments for the vHONLS and dvHONLS models was carried out 
using initial data (\ref{IC-A}) and varying $\Gamma$ within the range
$2.5 \times 10^{-4} \le \Gamma \le 10^{-2}$ with increment $\Delta\Gamma = 2.5\times 10^{-4}$.
 The purpose of the ensemble experiments is to investigate the dependence of the downshifting mechanism as well as of quantities such as (i) $t_{P_{min}}$, (ii) $t_{DS}$, and (iii) $t_{peak}$  on the damping parameter $\Gamma$. The relation between frequency downshifting and rogue wave formation is also examined. Note: For the viscous HONLS models, equation~ (\ref{key}) predicts downshifting  in $k_m$ for all $\Gamma$ and $\epsilon$ considered.

Figure \ref{time_ds} (a)  shows $t_{DS}$ and  $t_{P_{min}}$ as functions of   $\Gamma$ for the vHONLS model using initial data (\ref{IC-A}).  The corresponding graphs for the  the dvHONLS model
are given in Figure \ref{time_ds} (b).  In Figure \ref{time_ds} (c),
the dependence of  $t_{peak}$  on $\Gamma$ is  given for both the vHONLS and dvHONLS models.

\subsubsection*{Results of Ensemble Experiments}

\begin{itemize}
\item[i)] Frequency downshifting  occurs in the solutions  of
  both the vHONLS and dvHONLS equations across all initial conditions in the two-unstable modes regime  and for all given values of $\Gamma$.
  \item[ii)]        
We find that $t_{P_{min}}$ serves as an important precursor to $t_{DS}$
    for a broad range of  $\Gamma$ values.
    Specifically, in Figures \ref{time_ds} (a) and (b),  $t_{P_{min}} < t_{DS}$
    for $1.25 \times 10^{-3} \leq \Gamma \leq 8 \times 10^{-3}$. 
    
    \item[iii)] 
    Downshifting occurs via the following mechanism: 
    the initial decay of the energy $E$ and step-like increases in the  magnitude $|P|$ of the momentum  result in a predominantly downward shift in $k_{peak}$.  The global minimum of $P$ is reached at time $t_{P_{min}}$, followed by a decrease in $|P|$. Viscous damping  induces large rapid fluctuations in
    ~$P'$ and  $P''$, 
    which decay in magnitude as time evolves, with most of their decay occuring
    for $ 0 < t < t_{P_{min}}$.
The time of downshift $t_{DS}$ occurs  once $P''$ has become small enough
that changes in $P'$ are no longer sufficient for $k_{peak}$ to upshift back.

\item[iv)]  As $\Gamma$ increases, the rapid large fluctuations
  in~$P'$ and  $P''$
  diminish over progressively shorter time intervals. As a result, $t_{DS}$ approaches  $t_{P_{min}}$. At 
    $\Gamma = 0.008$,   we have $t_{P_{min}} \approx t_{DS}$.
For $0.008 \leq \Gamma \leq 0.01$, the downshifting mechanism follows the same pattern as in (iii) with  $t_{DS}$ occuring slightly 
before $t_{P_{min}}$ (the two times  very close).
 Figures~\ref{largeG} (a) - (c) show the evolutions of $P$ and its derivatives
for the vHONLS with $\Gamma = 0.01$ (the upper end of the parameter range) using initial data (\ref{IC-A}).

\item[v)] Both the time of the global minimum of
$P$ and the  time of downshift predominantly {\em decrease} as the value of $\Gamma$ is {\em increased}.  The solid red and blue curves give the best fit to the data for  $t_{Pmin}$ and $t_{DS}$, respectively.  To illustrate this for the vHONLS equation,  we find $t_{Pmin} =  1/2\Gamma + 8.8$ and
$t_{DS} =   0.987/\Gamma - 57.7$
 where
 $\Gamma \in [1,10]\times 10^{-3}$. Notice that there is a very good agreement between the analytical approximation of $t_{Pmin}$ given by equation~\eqref{tpmin} and the best fit from the numerical data.
 
\item[vi)]    
 Figure \ref{time_ds}  (c) shows that for all values of $\Gamma$,
 $t_{peak}$ is greater  for the vHONLS model in comparison with the dvHONLS model. This suggests increased spectral activity for the viscous case. Furthermore, the plot reveals that $t_{peak}$ also  decreases when $\Gamma$ increases.
\end{itemize}

\begin{figure}[!ht]
  \centerline{
\includegraphics[width=.3\textwidth]{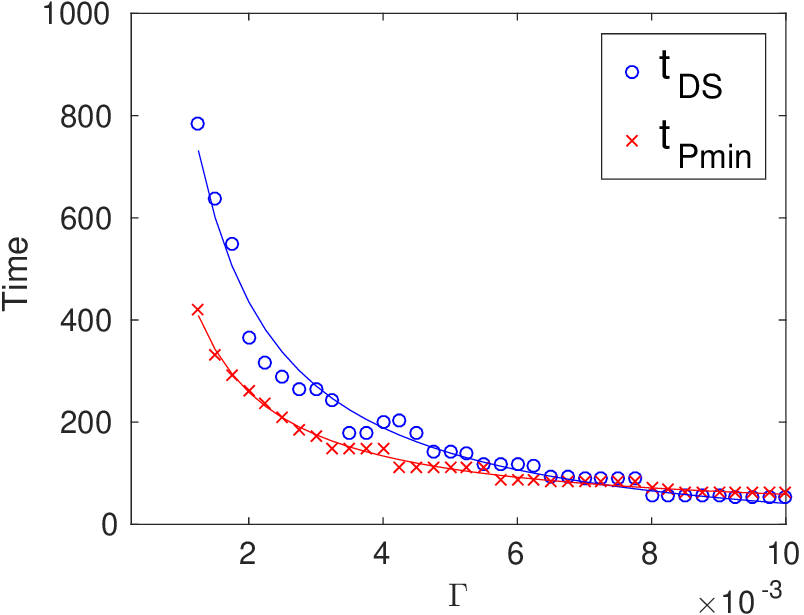}\hspace{12pt}

\includegraphics[width=.3\textwidth]{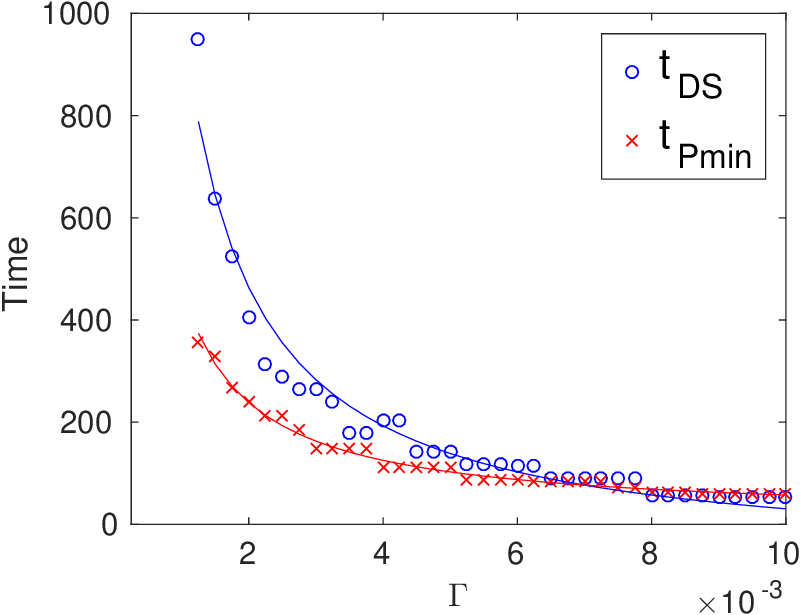}\hspace{12pt}
   \includegraphics[width=.3\textwidth]{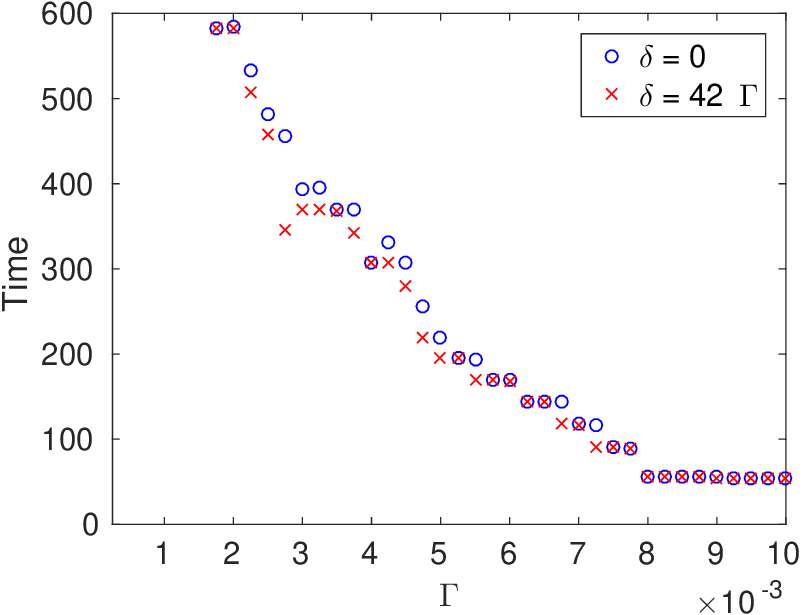}
  }
  \caption{Time of permanent downshift $t_{DS}$, and  time of global minimum of $P$, $t_{P_{min}}$,   for the a) vHONLS and b)  dvHONLS equations. Initial data (\ref{IC-A}) are used.  
    c)  Time of  $t_{peak}$  as function of $\Gamma$
 for the vHONLS ($\delta = 0$)  and dvHONLS ($\delta = 42\Gamma$) models.}
  \label{time_ds}
\end{figure}

\begin{figure}[!ht]
  \centerline{
    \includegraphics[width=.3\textwidth]{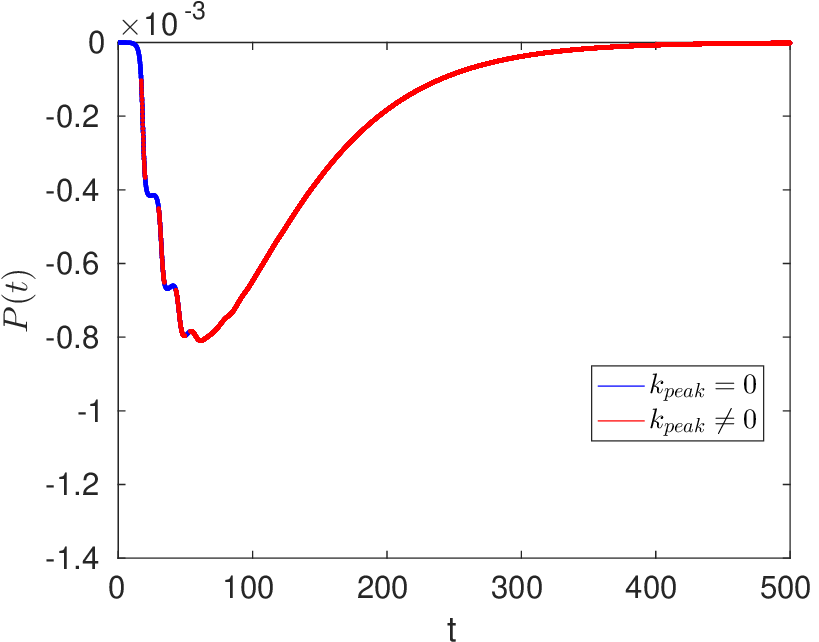}
    \includegraphics[width=.3\textwidth]{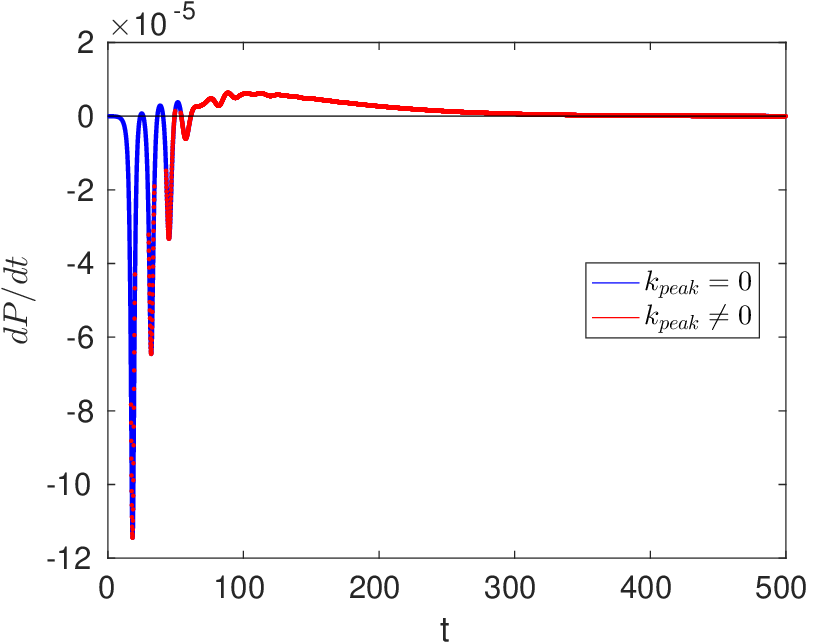}
    \includegraphics[width=.3\textwidth]{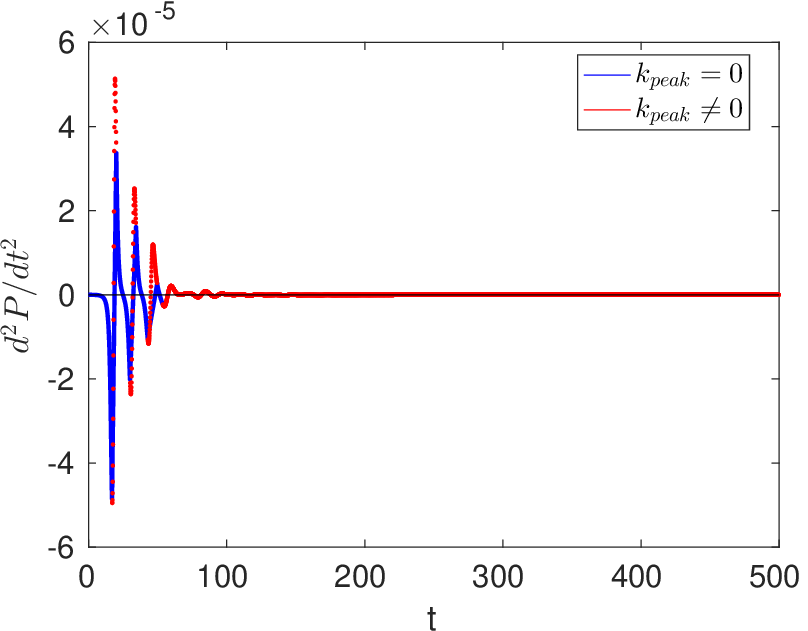}
  }
  \caption{Evolutions of (a) the momentum $P$ and its derivatives (b)  $P'$
    and (c)  $P''$ for $\Gamma = 0.01$ and initial data (\ref{IC-A}). Blue and red portions of the graphs mark the time intervals when $k_{peak}=0$ and $k_{peak}\neq 0$, respectively. Here $t_{P_{min}} = 61.3$ and
    $t_{DS} = 55.8$.}
  \label{largeG}
\end{figure}

\subsubsection*{Rogue wave activity}

 \begin{itemize}
 \item[i)]
Figures \ref{wave_st} (a)  -- (c) show the wave strength for solutions of
the LDHONLS, vHONLS, and   dvHONLS models, respectively, for $0 < t < 500$, $\Gamma = 0.00275$  and 
initial data \ref{IC-A}.
In this example  more rogue waves occur in the vHONLS and dvHONLS evolutions than in the LDHONLS evolution. 

\item[ii)]
 Figures \ref{NRWs} (a) -- (b) compare the number of rogue waves obtained in two sets of experiments: a) LDHONLS versus vHONLS evolution, and b) LDHONLS versus dvHONLS evolution. The
 comparison is conducted  over the time interval $0 < t < 500$, with $\Gamma$ as the varying parameter and initial data \ref{IC-A}.
 We find that, for all values of $\Gamma$, as many or more rogue waves develop in the vHONLS and dvHONLS evolution as compared with the LDHONLS evolution.

\item[iii)]
 In Figure \ref{NRWs} (c), a moving average of the time of permanent downshift is presented as a function of the maximum strength for a set of vHONLS (red curve) and dvHONLS (blue curve) experiments. Here  $\Gamma$ is varied in the interval $2.5 \times 10^{-4} \le \Gamma \le 10^{-2}$ and the initial condition is fixed as in \eqref{IC-A}. The resulting graphs exhibit an upward trend, indicating that for both the vHONLS and dvHONLS evolutions, the time of permanent downshifting is directly related to the maximum strength attained during the simulation. In other words, the greater the wave strength, the later the permanent downshift will occur.

\item[iv)]
 Figures~\ref{time_tlrw}   (a) - (b) provide the time of permanent downshift (marked by an 'x') and the time of the last rogue wave (marked by a 'box') for the (a) vHONLS and (b) dvHONLS models as functions of $\Gamma$, again using
 initial data \ref{IC-A}.
 Notably, our findings indicate that rogue waves  do not occur after permanent downshifting. 
\end{itemize}

\begin{figure}[!ht]
  \centerline{
    \includegraphics[width=.3\textwidth]{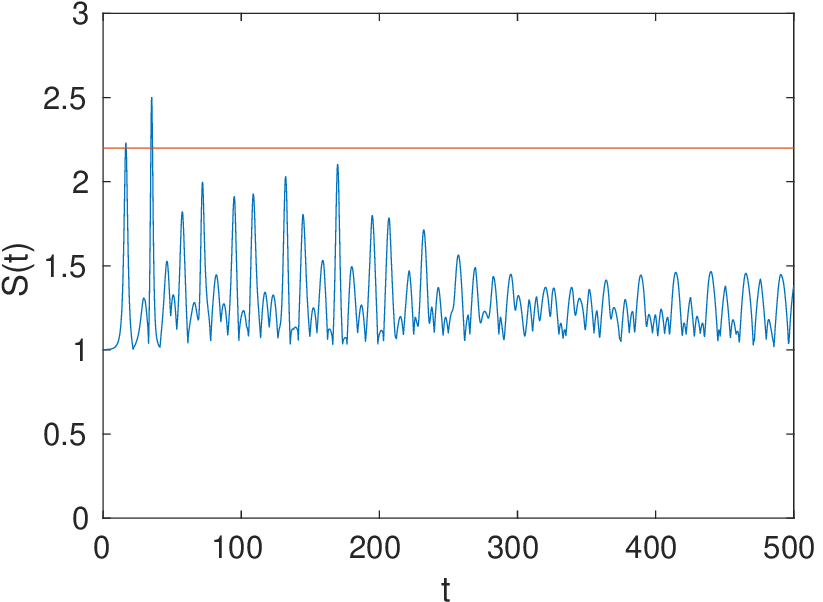}\hspace{12pt}
    \includegraphics[width=.3\textwidth]{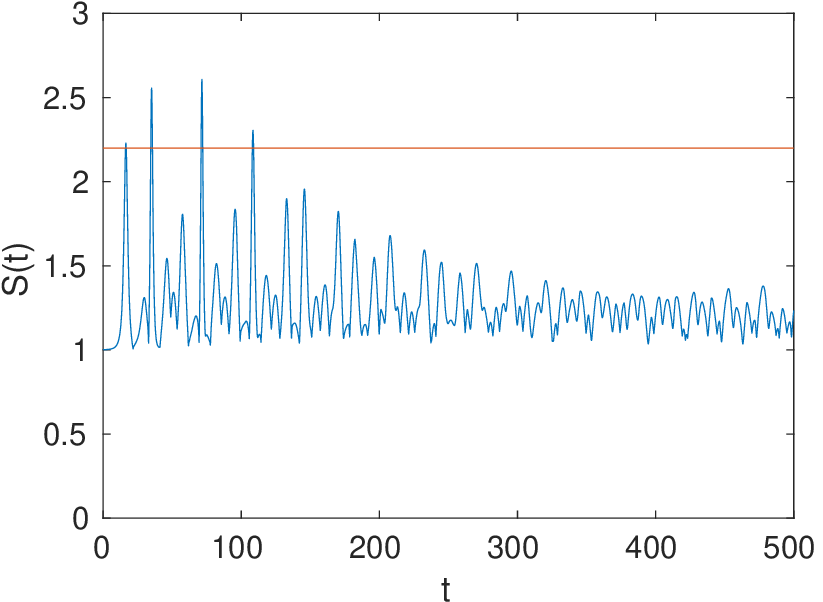}\hspace{12pt}
    \includegraphics[width=.3\textwidth]{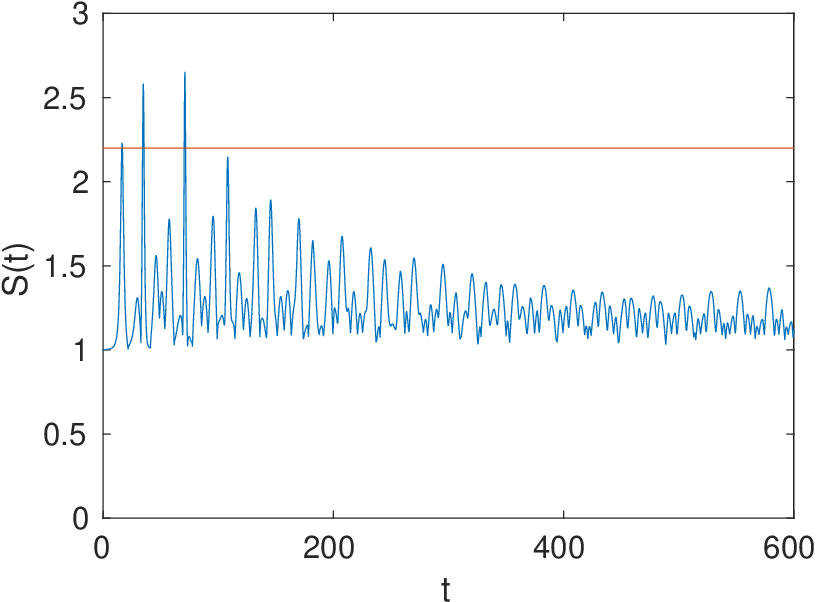}
  }
  \caption{Wave strength for  the a) LDHONLS, b) vHONLS, and  c) dvHONLS evolutions, for $\Gamma = 0.00275$ and initial data (\ref{IC-A}).}
  \label{wave_st}
  \end{figure}

\begin{figure}[!ht]
  \centerline{
    \includegraphics[width=.3\textwidth]{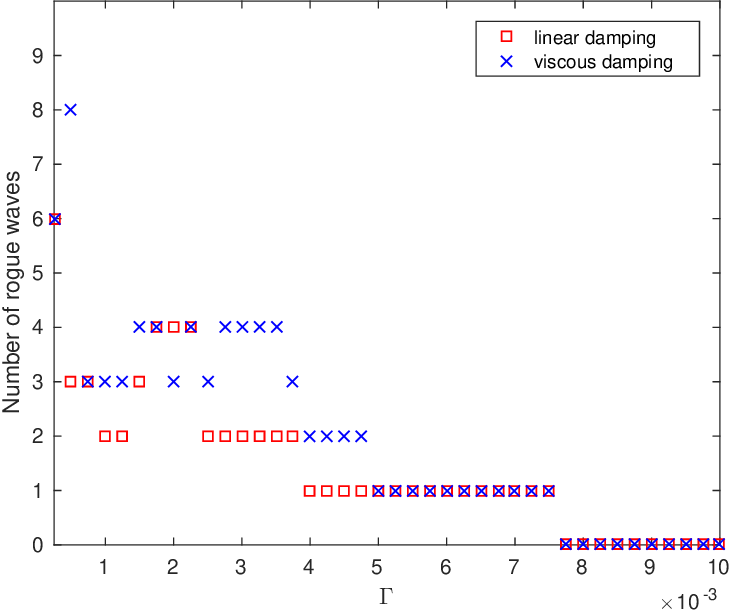}\hspace{12pt}
    \includegraphics[width=.3\textwidth]{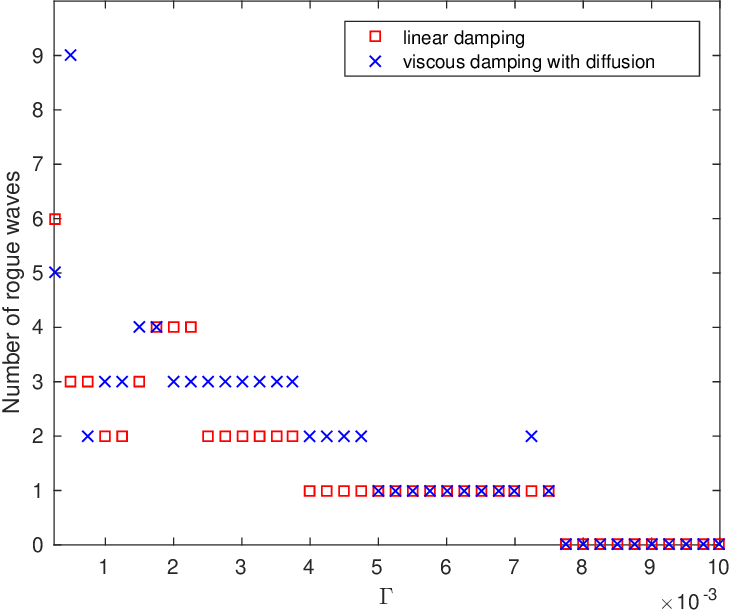}\hspace{12pt}
        \includegraphics[width=.34\textwidth]{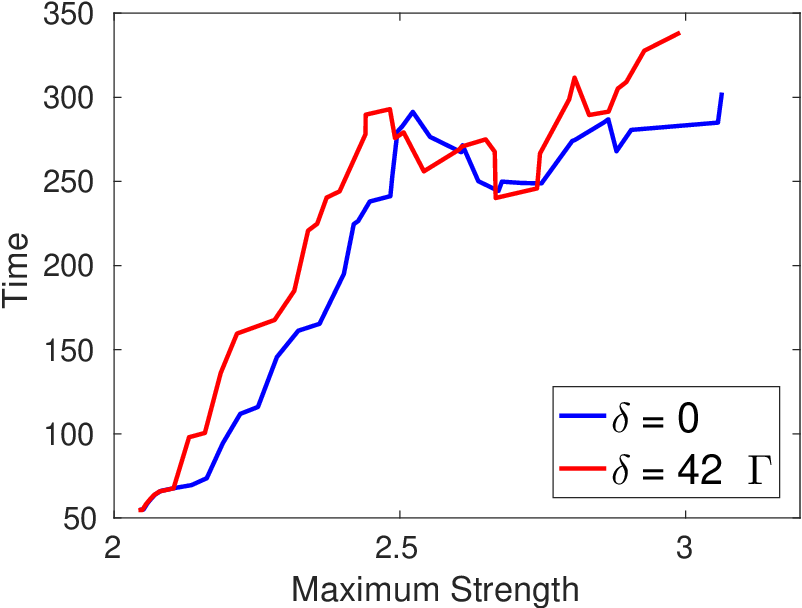}
  }
  \caption{Number of rogue waves as a function of $\Gamma$ for
    a) LDHONLS vs vHONLS evolutions, and b) LDHONLS vs dvHONLS evolutions . c)  Moving mean value of time of downshifting as a function of the maximum strength.  $\epsilon = 0.05$, $2.5 \times 10^{-4} \le \Gamma  \le 10^{-2}$.  
}
  \label{NRWs}
  \end{figure}

\begin{figure}[!ht]
  \centerline{
    \includegraphics[width=.31\textwidth]{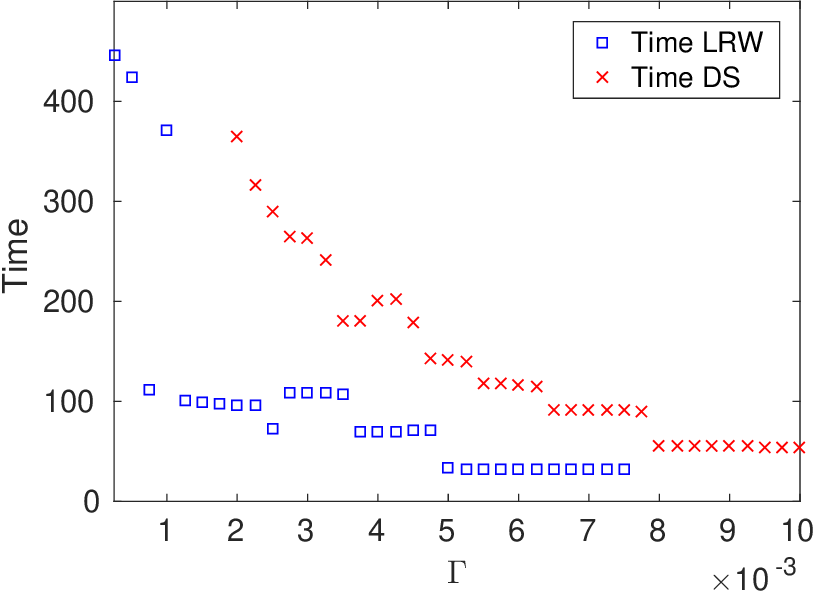}\hspace{14pt}
    \includegraphics[width=.3\textwidth]{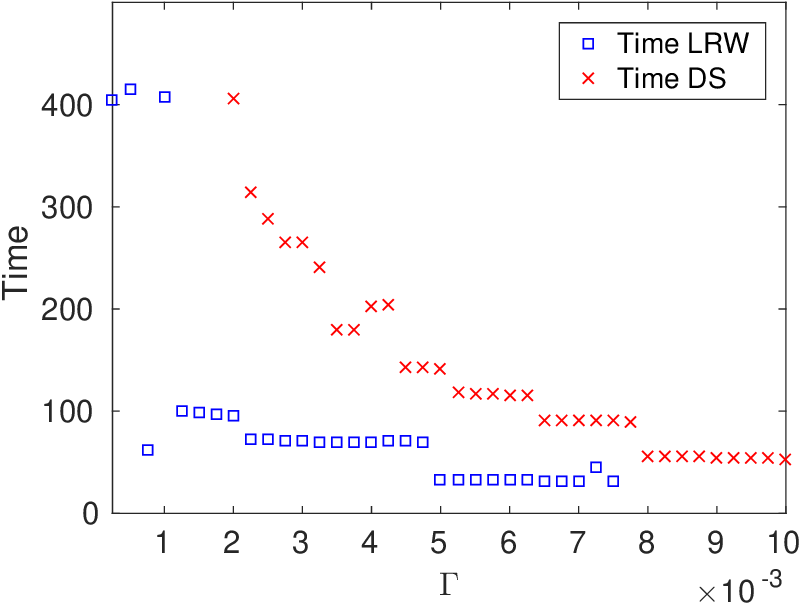}
  }
  \caption{Time of last rogue wave and permanent downshift as functions of $\Gamma$ for the a) vHONLS  and B) dvHONLS evolutions, for fixed initial data $u_0 = 0.45(1 + 0.01 \cos\mu x)$.}
  \label{time_tlrw}
\end{figure}

\section{Conclusions}
In this work we investigated the linear stability of damped Stokes waves, frequency downshifting, and rogue wave formation in a higher-order NLS model that includes viscosity.
The linear stability analysis of the damped Stokes wave solution focuses on the effects of the higher-order terms and the viscosity in two interesting regimes: the asymptotic long-time dynamics and the transition from an initial Benjamin-Feir instability to predominantly oscillatory behavior. While the long-time asymptotic results confirm and explain rigorously the findings by previous authors for similar models \cite{cartergovan, eeltink2019} (including destabilizing effects of viscosity on the higher modes and stabilization effects of diffusion), the study of the transient Benjamin-Feir instability, following in the steps of~\cite{segurstab}, is new, in that it shows that viscosity further destabilizes the initial evolution and it gives a perturbative formula for the transition time at which the initial instability begins to saturate. 

The second part of this study  combines analysis and numerics to investigate the mechanism for frequency downshifting and the time at which permanent downshift occurs.  Two spectral measures, the spectral mean $k_m$ and the spectral peak $k_{peak}$ are used to determine downshifting.  
 For the vHONLS and dvHONLS models, we obtain conditions for the spectral mean to decrease, corroborating prior  results for other dissipative HONLS models \cite{cartergovan, chb2019}.
On the other hand,  the
time  of permanent downshift in $k_{peak}$ was previously unexplored.
As a result, we determine a novel
 criterium  for predicting the occurrence and timing of permanent downshift in
$k_{peak}$ for viscous HONLS models. Our analysis explains the onset of downshifting in terms of the momentum $P$ and its first and second derivatives. We introduce the {\em precursor time $t_{P_{min}}$} to permanent downshifting as the time at which the global minimum of the linear momentum occurs and derive a perturbative estimate for its location.The estimate is in good agreement with the outcomes of an ensemble of numerical experiments. In a comparison of rogue wave activity in several damped HONLS models, we find that as many or more rogue waves develop in the vHONLS and dvHONLS models as in  the LDHONLS model.

\section{Appendix}
\label{derive_vHONLS}
In this section we provide a brief derivation of the viscous higher order nonlinear Schr\"odinger equation~\eqref{vHONLS}. Following \cite{didyza2008,cartergovan}, we start with equations that govern weakly damped two-dimensional free-surface flows:
\be\ba{rcl}
\Delta \phi &=& 0,\\
\frac{\partial\eta}{\partial t} + \frac{\partial\eta}{\partial x}\frac{\partial\phi}{\partial x} &=& \frac{\partial\phi}{\partial z} +
2\mu\frac{\partial^2 \eta}{\partial x^2} \mbox{ at } z = \eta(x,t),\\
\frac{\partial\phi}{\partial t} + \half|\nabla\phi|^2 + g\eta &=& -2\mu\frac{\partial^2\phi}{\partial z^2}  \mbox{ at } z = \eta(x,t),\\
\frac{\partial\phi}{\partial z} &\rightarrow 0& \mbox{ as }
z \rightarrow - \infty,
\ea\ee
where $\phi = \phi(x,z,t)$ is the velocity potential of the fluid, $\eta = \eta(x,t)$ is the surface displacement, and $\mu$ is the kinematic viscosity of the fluid. 

Assuming small amplitude approximations, we make the ansatz
\be
\ba{rcl}
\phi &=& \eps^2\bar\phi + \eps \left(A \e^{\ri\theta + kz} + *\right) + \cdots,\\
\eta &=& \eps^3\bar\eta + \eps \left(B \e^{\ri\theta + kz} + *\right) + \cdots,
\ea
\label{AB_exp}
\ee
where $\theta = \omega t - kx$, and where $\omega$ and $k$ are the frequency and wave number of the carrier wave. The amplitude $A$ and the mean $\bar\phi$ are functions of the slow variables $X = \eps x$, $Z = \eps z$, and $T = \eps t$, while $B$ and $\bar\eta$ are functions of $X$ and $T$ only. Here, $\eps = ka << 1$ is a measure of the steepness and $a$ is the initial surface amplitude. For small viscous effects it is assumed that $\mu$ is $O(\epsilon^2)$, so we replace it with $\eps^2\mu$, for $\mu$ an $O(1)$-term.

These assumptions yield the following equation for $A$
  \be\ba{rcl}
2i\omega\left(A_T +  \frac{\omega}{2k} A_X\right) &+& \eps\left(\left(\frac{\omega}{2k}\right)^2 A_{XX} + 4k^4|A|^2 A   +  4ik^2 \omega\mu A\right)\\
&+& \eps^2\left( - \frac{i\omega^2}{8k^3}A_{XXX} + 12\ri k^3|A|^2 A_X - 2\ri k^3A^2 A^*_x + 2k\omega A\bar{\phi}_X  - 8k\omega\mu A_X\right) = 0
\label{Eqn0}
\ea\ee
In the absence of viscosity ($\Gamma = 0$) the equation can be brought into
Hamiltonian form \cite{fd11,gramstad_trulsen_2011}. Noting that  $A^2A^*_x = \left(A^2A^*\right)_x - 2|A|^2A_x$, the term $\left(A^2A^*\right)_x$ is
eliminated  from the Hamiltonian formulation of the HONLS since it does not contribute to the Hamiltonian. The equivalent form of the vHONLS equation is
\be\ba{rcl}
2i\omega\left(A_T + \frac{\omega}{2k} A_X\right) &+& \eps\left(\left(\frac{\omega}{2k}\right)^2 A_{XX} + 4k^4|A|^2 A  +  4ik^2 \omega\mu A\right)\\
&+& \eps^2\left( -\frac{i\omega^2}{8k^3}A_{XXX} + 16ik^3|A|^2 A_X + 2k\omega A\bar{\phi}_X  - 8k\omega\mu A_X\right) = 0.
\label{Eqn1}
\ea\ee

Introducing the following dimensionless and translated variables
  \be
\tau = \frac{\eps\omega}{8}T,\
\chi = \frac{\omega T}{2} - k X,\
\zeta = kZ,\
u = 2\sqrt{2}\frac{k^2}{\omega}A,\
\tilde\phi = \frac{2k^2}{\omega}\bar\phi,\
\Gamma = \frac{16k^2\mu}{\omega},
 \ee
yields 
  \[
A_T = \omega\left(\frac{1}{8}\eps\omega A_\tau + \frac{\omega}{2} A_\chi \right),
\quad A_X = - k A_\chi,\quad
A_{XX} = k^2 A_{\chi\chi},\quad
A_{XXX} = -k^3 A_{\chi\chi\chi},
\]
which in turn gives
\[\ba{rcl}
2i\omega\left(A_T + \frac{\omega}{2k} A_X\right) &=& \frac{i\eps\omega^3}{8\sqrt 2 k^2} u_\tau, \\
& & \\
\left(\frac{\omega}{2k}\right)^2 A_{XX} + 4k^4|A|^2 A + 4ik^2 \omega\mu A
&=& \frac{\omega^3}{8\sqrt 2 k^2}\left(u_{\chi\chi}  + 2|u|^2 u + i\Gamma u\right),\\
& & \\
-\frac{i\omega^2}{8k^3}A_{XXX} + 16ik^3|A|^2 A_X + 2k\omega A\bar{\phi}_X - 8k\omega\mu A_X
&=& \frac{\omega^3}{8\sqrt 2 k^2}\left( \frac{i}{2} u_{\chi\chi\chi} - 8i|u|^2 u_\chi - 4 u\tilde{\phi}_\chi + 2\Gamma  u_\chi\right).
\ea\]

After solving Laplace's equation on $-\infty < \zeta < 0$ for $\tilde\phi$ in terms of $u$, with $\tilde\phi_\zeta = \half\left(|u|^2\right)_\chi$ at $\zeta = 0$ and $\tilde\phi_\zeta \rightarrow 0$ as $\zeta \rightarrow -\infty$, we arrive at the equation:
\be
{\ri}u_\tau + \left(u_{\chi\chi}  + 2|u|^2 u + \ri\Gamma u\right) + \eps\left( \frac{\ri}{2} u_{\chi\chi\chi} - 8{\ri}|u|^2 u_\chi + 2u\left[H\left(|u|^2\right)\right]_\chi + 2\Gamma  u_\chi\right) = 0.
\ee

\section*{Acknowledgements}
We thank the referees for their insightful comments and for pointing at several relevant references.
We gratefully acknowledge support by Summer Undergraduate Research with Faculty awards at
the College of Charleston and University of Central Florida, and by the School of Science and Mathematics at the College of Charleston.
A.~Calini and C.~M.~Schober would like to thank the Isaac Newton Institute for Mathematical Sciences for support and hospitality during the programme {\em Dispersive hydrodynamics: mathematics, simulation and experiments, with applications in nonlinear waves},  when part of the work on this paper was undertaken. This work was partially supported by: EPSRC grant number EP/R014604/1, and by the Simons Foundation through award \#527565 (PI: C.M.~Schober).

\bibliography{VMbib} 
\bibliographystyle{plain}

\end{document}